\DeclareMathOperator{\supp}{supp}
\newcommand{\R}{\mathbb{R}}
\newcommand{\trans}{\text{trans}}
\newtheorem{theorem}{Theorem}[section]
\newaliascnt{proposition}{theorem}
\newtheorem{proposition}[proposition]{Proposition}
\newaliascnt{lemma}{theorem}
\newtheorem{lemma}[lemma]{Lemma}
\newaliascnt{corollary}{theorem}
\newtheorem{corollary}[corollary]{Corollary}
\newaliascnt{claim}{theorem}
\theoremstyle{definition}
\newtheorem{definition}{Definition}
\newaliascnt{example}{theorem}
\newtheorem{example}[example]{Example}
\newaliascnt{assumption}{theorem}
\newtheorem{assumption}[assumption]{Assumption}
\newaliascnt{condition}{theorem}
\newaliascnt{question}{theorem}
\newaliascnt{remark}{theorem}
\newtheorem{remark}[remark]{Remark}
\newaliascnt{remarks}{theorem}
\newaliascnt{aside}{theorem}
\newaliascnt{note}{theorem}
\crefname{theorem}{theorem}{theorems}
\Crefname{theorem}{Theorem}{Theorems}
\crefname{proposition}{proposition}{propositions}
\Crefname{proposition}{Proposition}{Propositions}
\crefname{lemma}{lemma}{lemmas}
\Crefname{lemma}{Lemma}{Lemmas}
\crefname{corollary}{corollary}{corollaries}
\Crefname{corollary}{Corollary}{Corollaries}
\crefname{claim}{claim}{claims}
\Crefname{claim}{Claim}{Claims}
\crefname{definition}{definition}{definitions}
\Crefname{definition}{Definition}{Definitions}
\crefname{example}{example}{examples}
\Crefname{example}{Example}{Examples}
\crefname{assumption}{assumption}{assumptions}
\Crefname{assumption}{Assumption}{Assumptions}
\crefname{condition}{condition}{conditions}
\Crefname{condition}{Condition}{Conditions}
\crefname{question}{question}{questions}
\Crefname{question}{Question}{Questions}
\crefname{remark}{remark}{remarks}
\Crefname{remark}{Remark}{Remarks}
\crefname{remarks}{remarks}{remarks}
\Crefname{remarks}{Remarks}{Remarks}
\crefname{aside}{aside}{asides}
\Crefname{aside}{Aside}{Asides}
\crefname{note}{note}{notes}
\Crefname{note}{Note}{Notes}
\definecolor{backcolour}{rgb}{0.63, 0.79, 0.95}
\lstdefinestyle{mystyle}{
  backgroundcolor=\color{backcolour},
  basicstyle=\ttfamily\footnotesize,
  breakatwhitespace=false,
  breaklines=true,
  captionpos=b,
  keepspaces=true,
  numbers=left,
  numbersep=5pt,
  showspaces=false,
  showstringspaces=false,
  showtabs=false,
  tabsize=2
}
\begin{document}
\title{Pass-through with Price Dispersion}
\author{Brian C.~Albrecht \and Mark Whitmeyer}
\thanks{BA: International Center for Law \& Economics. Email: \href{mailto:mail@briancalbrecht.com}{mail@briancalbrecht.com}. MW: Arizona State University. Email: \href{mailto:mark.whitmeyer@gmail.com}{mark.whitmeyer@gmail.com}. We thank Rani Spiegler and Joseph Whitmeyer for their feedback. \href{https://www.refine.ink/}{refine.ink} helped us improve this paper.}
\begin{abstract}
How do cost shocks pass through to prices in markets with price dispersion? We decompose the problem into two layers. In the competition layer, consumers' consideration sets determine equilibrium distributions of normalized margins. In the curvature layer, demand elasticity maps these margins into prices and pass-through rates. We prove the pricing game is strategically equivalent to a game over normalized margins, so equilibrium margin distributions are invariant to demand and costs. This separation yields closed-form pass-through formulas at each quantile of the price distribution, robust bounds across demand specifications, and sharp comparative statics linking market structure to incidence.
\end{abstract}
\maketitle

\section{Introduction}

Empirical studies consistently document substantial price variation for homogeneous products.\footnote{For evidence on dispersion in relatively homogeneous retail markets, see \citet{Sorensen2000Drugs}. For evidence from online price-comparison settings, see \citet{baye_morgan_scholten2004_price_dispersion_small_large}.} The theoretical toolkit for analyzing pass-through presumes a unique price to differentiate. This paper develops a tractable framework for pass-through analysis when equilibrium features a price distribution rather than a price point.

The core contribution is a decomposition. We show that pass-through in dispersed-price markets separates into two tiers. The first is a \textit{competition layer}, where market structure determines an equilibrium distribution of normalized effective margins. The second is a \textit{curvature layer}, where demand elasticity determines how these margins translate into prices and pass-through rates. Our key result, \Cref{thm:isomorphism}, establishes that \textbf{the equilibrium margin distribution depends only on market structure, not on demand curvature or cost levels}. Demand and costs matter only for the mechanical translation from margins to prices.

This decomposition has three immediate payoffs for applied work. First, it yields closed-form pass-through formulas at each quantile of the price distribution, computable without resolving the full equilibrium for each cost level. Second, the invariance of margin distributions to costs means pass-through follows from differentiating a function, rather than from comparative statics on a complex game. Third, the separation identifies exactly when demand specification matters. The competition layer is demand-free, while the curvature layer requires knowing the demand function.

The paper contributes to two literatures that have developed largely in parallel. The literature on pass-through, following  \citet{WeylFabinger2013PassThroughTool}, derives simple formulas linking pass-through to demand curvature but requires a single equilibrium price, either under monopoly or symmetric oligopoly. We show how their curvature insights extend to equilibrium price distributions. The same demand fundamentals matter, but they interact with a distribution of markups rather than a single markup.\footnote{For search or limited-information models where equilibrium markup distributions are central (even with homogeneous goods), see, e.g., \citet{menzio2024_markups_wp,kaplan_menzio_rudanko_trachter2019_relative_price_dispersion,menzio_trachter2015_equilibrium_price_dispersion_sequential_search,burdett_menzio2018_qss_pricing_rule}.} The modern industrial organization literature on price dispersion, building on \citet{varian1980model} and \citet{BurdettJudd1983EquilibriumPriceDispersion}, characterizes when and why prices are dispersed, but has not addressed how cost shocks transmit through these distributions. We provide a general framework of pass-through characterization for dispersed-price equilibria.

We adopt the consideration-set framework of \citet{ArmstrongVickers2022Patterns} as our model of price dispersion. Consumers observe prices only from firms in their consideration set and purchase from the cheapest option. This generates mixed-strategy equilibria where firms randomize over prices. The consideration structure (who considers whom) is reduced-form and captures many phenomena, including search costs, informational frictions, platform algorithms, and behavioral inattention.\footnote{For models where platforms/intermediaries shape consumer consideration and can steer or limit access, see \citet{nocke_rey2024_consumer_search_steering_choice_overload,dinerstein_einav_levin_sundaresan2018_consumer_price_search_platform_design,hagiu_jullien2011_why_intermediaries_divert_search,teh_wright2022_intermediation_steering,teh_wang_watanabe2024_strategic_limitation_market_accessibility}. For search-engine monetization or position effects, see \citet{eliaz_spiegler2011_simple_model_search_engine_pricing,athey_ellison2011_position_auctions_consumer_search}.}
Our results apply regardless of the underlying source.

The isomorphism we identify reformulates the pricing game in terms of normalized effective margins rather than prices. The effective margin captures profit per customer served, normalized to lie in the unit interval. Under a natural monotonicity condition, there is a bijection between prices and margins. The key observation is that equilibrium behavior depends on the trade-off between margin and market share, and this trade-off is determined entirely by the consideration structure. Demand curvature and costs \textit{affect only the translation from margins back to prices}. This is why the equilibrium margin distribution is invariant. When costs change, the entire price distribution shifts, but the underlying distribution of competitive positions remains fixed.

The framework yields several results that would be difficult to obtain without the decomposition. We derive pass-through formulas at each quantile of the price distribution, revealing heterogeneity invisible to single-price analysis. Consumers buying at low prices face different pass-through than those buying at high prices. Under unit demand, transaction-weighted pass-through (what matters for welfare) reduces to a sufficient statistic depending only on the consideration-set structure, so incidence can be estimated from consideration data (platform clicks, surveys, geographic proximity) rather than from a demand system, building on methods surveyed in \citet{HonkaHortacsuWildenbeest2019EmpiricalSearch}. The identification problem shifts from ``how does quantity respond to price?'' to ``who considers whom?'' which is often more tractable in this case. We derive bounds on pass-through that hold across all demand specifications, useful for policy analysis when demand is only partially known. %We also clarify when aggregate pass-through can exceed unity, showing that the same market structure yields qualitatively different incidence patterns depending on demand curvature.

We develop these results for arbitrary consideration structures before specializing to cases that admit closed-form solutions. With symmetric firms, equilibrium has a simple characterization in terms of the probability generating function for competitor counts. With asymmetric firms under independent consideration, we derive equilibria exhibiting a hierarchical structure. Higher-reach firms price higher at each quantile and, if demand is not too convex, have lower pass-through at each quantile. These results make the framework applicable to structural estimation and policy counterfactuals.

\subsection{Road map} The rest of the paper proceeds as follows. \S\ref{sec:model} sets up the paper; \S\ref{sec:isomorphism} develops the central $\mu$-isomorphism result; %showing that the pricing game is strategically equivalent to a margin game whose equilibrium depends only on the consideration structure; 
and \S\ref{sec:equilibrium} characterizes the resulting margin distributions for symmetric markets, asymmetric duopoly, and independent consideration. \S\ref{sec:quantile} then uses the isomorphism to derive quantile pass-through formulas, while \S\ref{sec:transaction} converts these posted-price results into transaction-weighted pass-through statistics. \S\ref{sec:envelopes} derives robust pass-through bounds when demand is only partially specified, and \S\ref{sec:cs} develops comparative statics linking margin orderings to price and pass-through orderings. \S\ref{sec:applications} discusses applications to gasoline markets and online retail. \S\ref{subsec:endogenous-consideration} extends the framework to endogenous consideration. %and separates conditional from general-equilibrium pass-through. \S \ref{sec:conclusion} concludes. 
All proofs omitted from the main text lie in the appendices.

\subsection{Related Literature}

This paper connects two distinct literatures: the industrial organization literature on price dispersion and limited consideration and the literature on tax incidence and pass-through.

Price dispersion in homogeneous goods markets traces to \citet{varian1980model} and \citet{BurdettJudd1983EquilibriumPriceDispersion}, who show that when consumers observe different subsets of prices, firms randomize in equilibrium.\footnote{Other classic foundations include \citet{salop_stiglitz1977_bargains_ripoffs,butters1977_equilibrium_distributions_sales_advertising,stahl1989_oligopolistic_pricing_sequential_search,janssen_moraga2004_strategic_pricing_consumer_search_number_firms}. \citet{baye_morgan2001_information_gatekeepers,baye_morgan_scholten2004_price_dispersion_small_large} diagnose dispersion with information gatekeepers and price-comparison sites.} Recent work extends these foundations in several directions: \citet{Guthmann2024PriceDispersion,Guthmann2025Unawareness, Guthmann2025} analyze dynamic settings, %\citet{elliott2021market} study platform design, 
\citet{BergemannBrooksMorris2021Search} and \citet{Albrecht2020ConsumerData} derive robust equilibrium bounds that hold across information structures, while \citet{armstrong2017ordered}, \citet{RhodesZhou2019RetailStructure}, and \citet{RhodesWatanabeZhou2021Intermediaries} explore richer consideration structures including ordered search and multiproduct settings.

Most closely connected is \citet{ArmstrongVickers2022Patterns}, who develop a general framework for competition with arbitrary consideration patterns. We adopt their consideration-set framework and equilibrium concept. They note that their equilibrium analysis is unaffected by downward-sloping demand, provided revenue is increasing up to the monopoly price (their footnote~7). \citet{McAfee1994Availability} makes a similar observation (his Remark~1). Neither paper formalizes this invariance. We provide the first formal statement and proof via the $\mu$-isomorphism, and show that the resulting decomposition yields tractable pass-through formulas. Whereas Armstrong and Vickers focus on welfare and market structure with unit demand, we characterize pass-through with standard downward-sloping demand, deriving formulas that apply to any consideration structure.

A growing empirical literature estimates consideration sets and documents their importance for demand analysis. Different subfields use different terminology: ``awareness sets'' when the friction is ignorance \citep{HonkaHortacsuVitorino2017}, ``choice sets'' when it is institutional \citep{GaynorPropperSeiler2016}, and ``attention'' when it is cognitive \citep{AbaluckAdams2021}. The core insight is the same: ignoring consideration sets biases elasticity estimates and distorts counterfactual predictions. \citet{Goeree2008LimitedInfo} shows that full-information models underestimate PC markups by a factor of four. The gap arises because high-share firms have larger consideration sets, not just better products. \citet{AbaluckAdams2021} demonstrate that consideration-set frictions explain why Medicare defaults are sticky---a pattern full-information models cannot match.

The pass-through literature, following \citet{WeylFabinger2013PassThroughTool}, provides a unified framework showing that pass-through depends on the curvature of demand relative to its slope.\footnote{\citet{bulow_pfleiderer1983_note_effect_cost_changes_prices,seade1985_profitable_cost_increases_shifting_taxation,delipalla_keen1992_ad_valorem_specific_taxation,anderson_depalma_kreider2001_tax_incidence_differentiated_oligopoly} provide earlier analyses of incidence under imperfect competition.} Their approach has been extended to vertical markets \citep{AdachiEbina2014VerticalPassThrough}, welfare analysis \citep{AdachiFabinger2022PassThrough}, multiproduct firms \citep{armstrong2023multiproduct}, and platform fees \citep{Weyl2010Platforms}.\footnote{Other canonical two-sided platform pricing frameworks are \citet{rochet_tirole2003_platform_competition_two_sided_markets,rochet_tirole2006_two_sided_markets_progress_report,armstrong2006_competition_two_sided_markets}. \citet{rysman2009_economics_two_sided_markets} surveys this literature. \citet{wang_wright2020_search_platforms_showrooming_price_parity_clauses,wang_wright2025_regulating_platform_fees} explore search-platform environments with marketplace commissions with a view toward regulation.} Empirical work documents heterogeneity across markets \citep{MarionMuehlegger2011FuelTaxIncidence,Stolper2017WhoBearsEnergyTaxes,MillerOsborneSheu2017CementPassThrough}. We extend the Weyl-Fabinger framework to markets with equilibrium price \emph{dispersion}. The same demand-curvature fundamentals matter, but they interact with a distribution of markups rather than a single markup. Our separation into competition and curvature layers parallels their decomposition into conduct and curvature, with consideration structure playing the role of conduct, but microfounded through consideration structure rather than a reduced-form parameter. 

Three recent papers have begun bridging these literatures by analyzing pass-through when prices are dispersed. \citet{GarrodLiRussoWilson2024} provide a theoretical analysis of the Varian model with general demand, showing that whether captive or non-captive consumers bear more of a cost increase depends on whether demand is log-concave or log-convex. \citet{MontagMamrakSagimuldinaSchnitzer2023} develop a Varian-style model with informed and uninformed consumers and test it using tax changes in German and French fuel markets. They find that pass-through is higher for the minimum price paid by informed consumers than for the average price paid by uninformed consumers. 
\citet{FischerMartinSchmidtDengler2024} estimate a structural search model using German fuel data, finding that informed consumers face higher effective pass-through rates and that excise tax reductions would benefit consumers more than equivalent VAT cuts. The first two papers work with binary consumer types---captive versus non-captive, or informed versus uninformed---while \citet{FischerMartinSchmidtDengler2024} estimate a distribution of search intensity.\footnote{Recently, \citet{menzio2023_drs_wp,menzio2024_trade_wp} presents compelling explorations of search-theoretic imperfect competition and markup distributions.} %We build on their insights using the Armstrong-Vickers framework, which allows arbitrary consideration patterns, and we characterize pass-through at each quantile of the price distribution rather than by consumer type.

Beyond specific applications, our approach to transaction-weighting in mixed-strategy equilibria provides a general framework for computing consumer-relevant statistics when firms use mixed strategies. This paper also connects to the literature on order statistics and quantile methods in economics \citep{koenker2005quantile, chernozhukov2013inference}. We characterize pass-through at each quantile of the price distribution, recovering heterogeneity that single-price models collapse.

\section{Model} \label{sec:model}

There are $n$ firms, indexed by $N = \{1,\ldots,n\}$. The demand side consists of a unit mass of consumers, partitioned by their consideration sets.\footnote{Other studies, such as \citet{Perla2023Awareness}, \citet{Guthmann2024PriceDispersion}, \citet{McAfee1994Availability}, \citet{Albrecht2020ConsumerData}, \citet{Guthmann2025}, and \citet{ArmstrongVickers2022Patterns}, use terms such as ``awareness,'' ``availability rate,'' ``choice set,'' ``loyal customers,'' and ``consideration set'' to indicate the subset of firms that buyers have access to.}

\begin{definition}
A \emph{consideration structure} is a probability distribution $\{\alpha_S\}_{S \subseteq N}$ where $\alpha_S \geq 0$ represents the mass of consumers who consider exactly the set $S$ of firms, with $\sum_{S \subseteq N} \alpha_S = 1$.\footnote{We allow $\alpha_\emptyset \geq 0$, representing consumers who consider no firms.}% (outside option).
\end{definition}

Each consumer observes prices only from firms in her consideration set and purchases from the lowest-priced firm, provided that price does not exceed 1.\footnote{We interpret 1 as an upper bound on feasible prices rather than a choke price---consumers have positive demand even at $p = 1$ (see \Cref{ass:demand}).}

\begin{example}[Random Search]
If each consumer samples each firm independently with probability $\lambda \in (0,1)$, then \(\alpha_S = \lambda^{|S|}(1-\lambda)^{n-|S|}\), so consideration is generated by i.i.d. Bernoulli sampling and \(\left|S\right|\) is binomially distributed.
\end{example}

\begin{example}[Spatial Markets]
Consider \(n\) equally-spaced firms located on a circle. If consumers are distributed uniformly around the circle and observe only their \(1 \leq k < n\) nearest neighbors, then \(\alpha_S = 1/n\) if \(S\) consists of \(k\) consecutive firms and \(0\) otherwise.
\end{example}

For analytical convenience, we define:
\begin{definition}
Firm $i$'s \emph{reach} is $\sigma_i \equiv \sum_{S \ni i} \alpha_S$, the mass of consumers who consider $i$.\footnote{Throughout, we restrict attention without loss of generality to firms with strictly positive reach.} Firm $i$'s \emph{captive share} is $\alpha_{\{i\}}$, the mass who consider only $i$. Firm $i$'s \emph{captive-to-reach ratio} is $\rho_i \equiv \alpha_{\{i\}}/\sigma_i \in [0,1]$.
\end{definition}
Each consumer who purchases at price $p$ demands quantity $x(p)$ where:

\begin{assumption}\label[assumption]{ass:demand}
The function $x \colon (0,1] \to \R_+$ is continuous, weakly decreasing, and continuously differentiable, with $x(1) > 0$.
\end{assumption}

Firms share a common marginal cost $c \in [0,1)$. The cost may represent production costs, taxes, or input prices whose changes we wish to study. Firms simultaneously choose price distributions. We break ties uniformly: whenever multiple firms offer the same lowest price within a consumer's consideration set, we assume that the consumer randomizes uniformly across the tied lowest-priced firms. Formally, for any nonempty consideration set \(S\subseteq N\) and realized price profile
\(\mathbf{p} \in[c,1]^N\), define the set of minimizers
\[
M_S(\mathbf{p})\coloneqq \arg\min_{j\in S} p_j.
\]

A consumer with consideration set \(S\) purchases from firm \(i\in S\) with probability
\[
\frac{\mathbf{1}\left\{i\in M_S(\mathbf{p})\right\}}{\left|M_S(\mathbf{p})\right|}.
\]
Consequently, given rival mixed strategies \(F_{-i}\), the \emph{demand} faced by firm \(i\) when it posts price \(p\) is, therefore,
\[\label{eq:demand}\tag{Demand}
q_i(p)\coloneqq \sum_{S\ni i}\alpha_S\cdot
\mathbb{E}\left[
\frac{\mathbf{1}\left\{i\in M_S\left(p, p_{-i}\right)\right\}}
{\left|M_S\left(p, p_{-i}\right)\right|}
\right].
\]

If the mixed-strategy cumulative distribution functions (CDFs) are atomless on the interior of their support (which is typical in price competition with continuous payoffs on each side of any price), so ties occur with probability zero,
the demand formula simplifies to
\[\label{eq:demand1}\tag{Demand\(^*\)}
q_i(p) = \sum_{S \ni i} \alpha_S \prod_{j \in S \setminus \{i\}} (1 - F_j(p))\text{,}\]
where we follow the convention that the empty product (when $S = \{i\}$) equals \(1\). \eqref{eq:demand1} is the mass of consumers who consider $i$ and find $i$ strictly cheaper than all rivals in their consideration set.

Firm $i$'s profit from posting price $p$ is \[\label{eq:profit}\tag{Profit}
\Pi_i(p;c) = (p-c)x(p)q_i(p)\] A profile of price CDFs $(F_i)_{i \in N}$ on $[c,1]$ constitutes an equilibrium if for each firm $i$ two conditions hold:
\begin{enumerate}
    \item Each firm is indifferent over all prices on the support of its distributions: for all prices $p$ in the support of $F_i$, $\Pi_i(p;c) = \pi_i(c)$.
    \item No firm has a profitable deviation: for all prices $p \in [c,1]$ outside the support of $F_i$, $\Pi_i(p;c) \leq \pi_i(c)$.
\end{enumerate}

We also make the following monotonicity assumption on the effective margin per served consumer:
\begin{assumption}\label[assumption]{ass:invertible}
The \emph{effective margin per served consumer}, function $(p-c)x(p)$, has strictly positive derivative on $[c,1]$: $x(p) + (p-c)x'(p) > 0$ for all $p \in [c,1]$.
\end{assumption}
This assumption ensures that higher prices correspond to higher effective margins, maintaining the trade-off between margin and market share.  For the standard demand classes considered below, this assumption is equivalent to the effective margin $(p-c)x(p)$ being maximized at the boundary $p = 1$ rather than at an interior price. To ensure invertibility, we will always need some assumption like this. With unit demand, which is common in the search and price dispersion literature, the assumption holds automatically since $(p-c) \cdot 1$ is linear in $p$. 

When the effective margin is instead maximized at an interior price $\hat{p}(c) < 1$, the assumption fails on $[c,1]$. For example, with CES demand \(x(p)=p^{-\eta}\), the effective margin is \((p-c)p^{-\eta}\), with derivative \(p^{-\eta-1}[(1-\eta)p+\eta c]\). Consequently, \Cref{ass:invertible} holds automatically for \(\eta\in[0,1)\), holds for \(\eta=1\) whenever \(c>0\) (while \(c=0\) makes \(p\mapsto (p-c)x(p)\)
constant on \((0,1]\)), and for \(\eta>1\) holds on \([c,1]\) if and only if \(c> 1-1/\eta\). When \(\eta>1\) and \(c\in(0,1-1/\eta]\), the effective margin is maximized at \(\hat p(c)=\eta c/(\eta-1)\in(c,1]\). The isomorphism can be extended in this case by normalizing margins by
\(\bar m(c)\equiv \max_{p\in[c,1]}(p-c)x(p)\) rather than \((1-c)x(1)\), and restricting attention to prices \(p\le \hat p(c)\), since any price above \(\hat p(c)\) yields the same effective margin as a lower price and a weakly lower demand share. The competition layer is unchanged---what changes is the curvature layer, since \(\bar m'(c)=-x(\hat p(c))\) when the maximum is interior. We maintain \Cref{ass:invertible} throughout to preserve the simpler structure. %For example, with CES demand $x(p) = p^{-\eta}$, the effective margin is maximized at $\hat{p}(c) = \eta c/(\eta-1)$, so the assumption holds if and only if $c \geq 1 - 1/\eta$. The isomorphism can be extended to this case by normalizing margins by $\bar{m}(c) \equiv \max_{p \in [c,1]}(p-c)x(p)$ rather than $(1-c)x(1)$. The competition layer is unchanged; what changes is the curvature layer, since $\bar{m}'(c) = -x(\hat{p}(c))$ varies with $c$ when the maximum is interior. We maintain \Cref{ass:invertible} throughout to preserve the simpler structure.

\section{The \texorpdfstring{\(\mu\)-Isomorphism}{isomorphism}} \label{sec:isomorphism}

This section establishes our central theoretical result: the pricing game with standard demand is isomorphic to a unit-demand game after a change of variables. Prior work has noted this invariance informally. \citet{ArmstrongVickers2022Patterns} observe in their footnote~7 that their analysis ``is not affected if each consumer has a downward-sloping demand function $x(p)$, provided revenue $px(p)$ is an increasing function up to the monopoly price,'' and \citet{McAfee1994Availability} notes that ``there is no difference in profits between different demand curves with the same monopoly profits'' (Remark~1). Neither paper formalizes this observation. We do so here via a change of variables that makes the invariance precise and yields a decomposition useful for pass-through analysis. Firms compete over effective margin per customer served---how much profit they extract from each transaction. Once we normalize these margins appropriately, the strategic problem becomes identical regardless of the underlying demand function.

\subsection{Normalized Effective Margins}

To see the isomorphism, start with what firms compete over. When a firm sets price $p$, it earns $(p-c)x(p)$ from each customer it serves. This is the effective margin. Different demand functions $x(\cdot)$ change the mapping from prices to effective margins, but the strategic trade-off is the same. Higher margins mean higher profit per customer but lower probability of winning customers.

\begin{definition}The \emph{normalized effective margin} at price $p$ is:
\[\label{eq:mu_def}\tag{Normalized Effective Margin}
\mu(p;c) \equiv \frac{(p-c)x(p)}{(1-c)x(1)} \in [0,1].\]
\end{definition}%Given demand $x(\cdot)$ and cost $c$

The numerator is firm profit per served consumer. The denominator is the profit per served consumer when pricing at the maximum price. This normalization maps all possible effective margins to the unit interval $[0,1]$, with $\mu = 0$ corresponding to pricing at cost (zero margin) and $\mu = 1$ corresponding to extracting maximum profit via the maximum price. Under \Cref{ass:invertible}, $\mu(p;c)$ is strictly increasing in \(p\) on the equilibrium support, ensuring a bijection between prices and normalized margins.

The inverse map \(\phi(\mu,c)\) solves:
\[\label{eq:phi_def}\tag{Inverse Map}
(\phi(\mu,c) - c)x(\phi(\mu,c)) = \mu(1-c)x(1).\]
We record the existence and uniqueness of the inverse map \(\phi\):
\begin{lemma}\label[lemma]{lem:phi_exists}
Under \Cref{ass:demand,ass:invertible}, for each $\mu \in [0,1]$ and $c \in [0,1)$, there exists a unique inverse map $\phi(\mu,c) \in [c,1]$ satisfying \eqref{eq:phi_def}.
\end{lemma}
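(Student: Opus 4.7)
The plan is straightforward: I will read off the claim as an application of the intermediate value theorem together with strict monotonicity, applied to the effective-margin function $g(p) \equiv (p-c)x(p)$ viewed as a map from $[c,1]$ into $\R$.

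First I would record the endpoint values: $g(c) = 0$ and $g(1) = (1-c)x(1) > 0$ (strict positivity uses $c < 1$ and $x(1) > 0$ from Assumption \ref{ass:demand}). Next, Assumption \ref{ass:demand} gives that $x$ is continuous on $[0,1]$, hence $g$ is continuous on $[c,1]$. Assumption \ref{ass:invertible} says $g$ is strictly increasing on $[c,1]$. Thus $g$ is a continuous, strictly increasing bijection from $[c,1]$ onto $[0,(1-c)x(1)]$.

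Then for any $\mu \in [0,1]$, the target value $\mu(1-c)x(1)$ lies in $[0,(1-c)x(1)]$, so by the intermediate value theorem there exists at least one $\phi(\mu,c) \in [c,1]$ with $g(\phi(\mu,c)) = \mu(1-c)x(1)$, which is exactly equation \eqref{eq:phi_def}. Uniqueness follows because $g$ is strictly increasing, so any two distinct preimages would give distinct values.

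I do not expect any real obstacle: the content of the lemma is simply that, under the two assumptions, $g$ is a continuous strictly monotone map onto the relevant interval, so inversion is automatic. The only subtlety worth flagging explicitly is the boundary case $c = 1$, which is excluded by hypothesis ($c \in [0,1)$), so the denominator $(1-c)x(1)$ is strictly positive and the normalization in \eqref{eq:mu_def} is well-defined; at the two endpoints $\mu = 0$ and $\mu = 1$ the solutions are $\phi(0,c) = c$ and $\phi(1,c) = 1$, which confirms that the range of $\phi(\cdot,c)$ is indeed $[c,1]$ as claimed.
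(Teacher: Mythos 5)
Your proof is correct and takes essentially the same route as the paper: both arguments combine the intermediate value theorem (via the endpoint values of the effective-margin function on $[c,1]$) with strict monotonicity from Assumption \ref{ass:invertible} to get existence and uniqueness. The only cosmetic difference is that the paper subtracts the target value $\mu(1-c)x(1)$ into its auxiliary function before applying the IVT, whereas you apply it to $(p-c)x(p)$ directly; the substance is identical.
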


\subsection{The Main Isomorphism Result}

Once we reformulate the game in terms of normalized effective margins $\mu$, the demand-curvature dependence vanishes. Equilibrium depends only on the consideration structure. Demand curvature and costs matter only for translating these margin distributions back into price distributions.

We define the margin game firms take part in as follows.

\begin{definition}\label{def:margin-game}
In the \emph{margin game}, firms simultaneously choose probability distributions over margins $\mu \in [0,1]$. When firm $i$ posts margin $\mu$ and rivals use CDFs $(F_j^\mu)_{j\neq i}$, firm $i$'s payoff is $\Pi_i^\mu(\mu) = \mu \cdot q_i^\mu(\mu)$, where the demand share $q_i^\mu(\mu)$ is defined analogously to \eqref{eq:demand}:
\[q_i^\mu(\mu) = \sum_{S\ni i}\alpha_S\cdot
\mathbb{E}\left[
\frac{\mathbf{1}\left\{i\in M_S\left(\mu, \mu_{-i}\right)\right\}}
{\left|M_S\left(\mu, \mu_{-i}\right)\right|}
\right],\]
with $M_S(\mu, \mu_{-i})$ denoting the set of margin-minimizers in $S$ and ties broken uniformly. When rival CDFs are atomless at $\mu$, ties occur with probability zero and the demand share simplifies  as in \eqref{eq:demand1}:
\[q_i^\mu(\mu) = \sum_{S\ni i} \alpha_S \prod_{j\in S\setminus\{i\}} \left[1-F_j^\mu(\mu)\right].\]
\end{definition}

An equilibrium in the margin game exists:
\begin{lemma}\label{lem:existence}
The margin game admits a mixed-strategy Nash equilibrium.
\end{lemma}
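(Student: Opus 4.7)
The plan is to observe that, up to relabeling, the margin game is exactly the unit-demand pricing game studied by \citet{ArmstrongVickers2022Patterns}: firm $i$'s payoff from posting $\mu\in[0,1]$ against rival CDFs $(F_j^\mu)_{j\ne i}$ is $\mu\cdot q_i^\mu(\mu)$, which coincides (after normalizing cost to $0$ and reservation value to $1$) with the profit function they analyze, and the consideration structure $\{\alpha_S\}$ is identical. Their existence result therefore delivers the lemma immediately. For completeness I would also sketch a self-contained argument, along the following lines.

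First, I would topologize each firm's strategy set as the space of Borel probability measures on the compact interval $[0,1]$ endowed with the weak-$*$ topology; by Prokhorov's theorem this is compact and metrizable, and the product strategy space is compact Hausdorff. Payoffs are bounded since $\mu\cdot q_i^\mu(\mu)\in[0,1]$.

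Second, I would verify the hypotheses of a discontinuous-games existence theorem in the style of \citet{dasgupta1986existence} or Reny's better-reply security. The payoff $\mu\cdot q_i^\mu(\mu;\mu_{-i})$ is continuous in $\mu_{-i}$ at any pure profile with no ties, and the uniform tie-breaking rule in Definition \ref{def:margin-game} makes the sum $\sum_i\mu_i\cdot q_i^\mu$ upper semicontinuous in the pure profile, with individual payoffs that are weakly lower semicontinuous in own action in the Dasgupta--Maskin sense (a downward jump at a tie is exactly compensated by the ability to undercut by $\varepsilon$). These are the standard conditions that yield a mixed-strategy equilibrium in Bertrand-type pricing games.

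The main obstacle is handling the discontinuity at margin ties. I would address it by observing that the discontinuity set is confined to the diagonal-type subset where two or more firms post a common mass point, that uniform tie-breaking keeps the total payoff upper semicontinuous across this set, and that each firm can restore continuity of its own expected payoff by a local downward perturbation — exactly the payoff-security argument. Since this verification is identical to the one already carried out for the pricing game in \citet{ArmstrongVickers2022Patterns}, the cleanest route in the write-up is simply to invoke their existence theorem applied to the normalized game and omit the duplicated machinery.
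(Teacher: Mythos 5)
Your self-contained sketch is essentially the paper's own proof: the paper verifies the hypotheses of \citet[Theorem~5]{DasguptaMaskin1986DiscontinuousGames} in exactly the steps you list — compact strategy sets and bounded payoffs, discontinuities confined to the diagonals $\mu_i=\mu_j$, upper semicontinuity of the payoff sum via uniform tie-breaking (the paper in fact shows $\sum_i U_i(\mu)=\sum_S \alpha_S \min_{j\in S}\mu_j$ is continuous), and weak lower semicontinuity in own action via the undercutting/left-limit argument. Your preferred shortcut of simply invoking the existence result for the unit-demand game of \citet{ArmstrongVickers2022Patterns} is a legitimate alternative the paper does not take, but since your fallback verification matches the paper's argument point for point, the proposal is correct and takes the same route.
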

The proof verifies the hypotheses of \citet[Theorem~5]{DasguptaMaskin1986DiscontinuousGames}.%, appears in Appendix~\ref{app:existence-proof}.

\begin{theorem}\label{thm:isomorphism}
Consider the pricing game with demand $x(\cdot)$ satisfying \Cref{ass:demand,ass:invertible} and cost $c$. Then:
\begin{enumerate}[label=(\alph*),nosep]
\item\label{item:bijection} The map $\Phi$ defined by $\Phi(F_i)(\mu) \coloneqq F_i(\phi(\mu,c))$, so that $F_i^\mu = \Phi(F_i)$, is a bijection between equilibria of the pricing game and equilibria of the margin game.\footnote{Thus, the pricing game admits a mixed-strategy equilibrium.}
\item\label{item:invariance} The equilibrium $\mu$-distributions depend only on the consideration structure $(\alpha_S)_{S\subseteq N}$, not on $x(\cdot)$ or \(c\).
\end{enumerate}
\end{theorem}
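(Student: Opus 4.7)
The plan is to exploit Assumption~\ref{ass:invertible} and Lemma~\ref{lem:phi_exists}, which together make $\mu(\cdot;c):[c,1]\to[0,1]$ a strictly increasing continuous bijection with inverse $\phi(\cdot,c)$. I would define $\Phi$ as the pushforward under this bijection: given a pricing CDF $F_i$ on $[c,1]$, set $F_i^\mu(\mu)\equiv F_i(\phi(\mu,c))$. Because $\mu(\cdot;c)$ is strictly increasing and continuous, $\Phi$ is a bijection at the level of Borel probability measures on the two intervals, with inverse $F_i(p)=F_i^\mu(\mu(p;c))$, and it carries $\supp(F_i)\subseteq[c,1]$ onto $\supp(F_i^\mu)\subseteq[0,1]$.

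The core of the argument is a payoff identity. Strict monotonicity of $\mu(\cdot;c)$ implies that for any realized price profile $\mathbf{p}$ with corresponding margin profile $\boldsymbol{\mu}=(\mu(p_j;c))_j$, the minimizer set is preserved: $M_S(\mathbf{p})=M_S(\boldsymbol{\mu})$ for every $S\ni i$, and $|M_S(\mathbf{p})|=|M_S(\boldsymbol{\mu})|$. Combined with the identical uniform tie-breaking rule in both games, this yields $q_i(p)=q_i^\mu(\mu(p;c))$ whenever rivals play $F_{-i}$ in the pricing game and $\Phi(F_{-i})$ in the margin game. Using $(p-c)x(p)=(1-c)x(1)\,\mu(p;c)$ from Definition~3.2,
\[
\Pi_i(p;c) \;=\; (1-c)x(1)\cdot \mu(p;c)\cdot q_i^\mu(\mu(p;c)) \;=\; (1-c)x(1)\cdot \Pi_i^\mu(\mu(p;c)).
\]
Thus the two payoff functions agree up to the positive, strategy-independent constant $(1-c)x(1)$, composed with a strictly monotone reparametrization of the strategy set.

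Part~\ref{item:bijection} then follows by translating both equilibrium conditions verbatim through this bijection: indifference at every $p\in\supp(F_i)$ is equivalent to indifference at every $\mu\in\supp(F_i^\mu)$, and the no-profitable-deviation condition on $[c,1]$ is equivalent to its counterpart on $[0,1]$. Existence for the pricing game is then inherited from Lemma~\ref{lem:existence} via $\Phi^{-1}$. Part~\ref{item:invariance} is immediate from Definition~\ref{def:margin-game}: neither $\Pi_i^\mu(\mu)=\mu\cdot q_i^\mu(\mu)$ nor $q_i^\mu(\mu)$ references $x(\cdot)$ or $c$; both depend only on $(\alpha_S)_{S\subseteq N}$. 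Hence the entire set of equilibrium margin distributions is an invariant of the consideration structure.

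The step I expect to require the most care is the tie-breaking verification, since the demand formula~\eqref{eq:demand} is stated for arbitrary mixed strategies, not just atomless ones. The key observation—that two prices coincide if and only if the corresponding margins coincide, so cardinalities of minimizer sets are preserved—is what makes uniform randomization over ties produce the same demand share in both games. Everything else is a direct consequence of the strictly monotone change of variables and the constant factor $(1-c)x(1)$.
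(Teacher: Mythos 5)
Your proposal is correct and follows essentially the same route as the paper's proof: the pushforward bijection $F_i^\mu(\mu)=F_i(\phi(\mu,c))$, the payoff identity $\Pi_i(p;c)=(1-c)x(1)\cdot\Pi_i^\mu(\mu(p;c))$ via preservation of minimizer sets under the strictly increasing change of variables, translation of both equilibrium conditions, and existence inherited from Lemma~\ref{lem:existence}. Your explicit treatment of tie-breaking (that $|M_S(\mathbf{p})|=|M_S(\boldsymbol{\mu})|$) is if anything slightly more careful than the paper's remark that the lowest-price firm is the lowest-margin firm.
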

The normalized effective margin $\mu(p;c)$ gives a bijection between prices and margins, and payoffs in the pricing game are proportional to payoffs in the margin game with a common scaling factor $(1-c)x(1)>0$, \textit{which does not affect best-response comparisons}.

\begin{remark}
The invertibility condition that $p\mapsto (p-c)x(p)$ is strictly increasing ensures that the effective margin is strictly increasing in price. Without this, multiple prices could yield the same margin, breaking the bijection. Economically, the condition requires that higher prices always generate higher profit per customer served; \textit{viz.}, firms never face a ``backward-bending'' margin curve. This holds for unit demand (where $x(p)\equiv 1$) and for most standard demand specifications when markups are moderate.
\end{remark}

The $\mu$-isomorphism clarifies pass-through analysis. Pass-through decomposes into two distinct layers: the competition layer solves for equilibrium $\mu$-distributions using only the consideration structure $\{\alpha_S\}$, determining how market power is distributed across the price distribution. The curvature layer maps normalized margins to prices via $p = \phi(\mu,c)$ using the demand function $x(\cdot)$ and cost $c$, determining how a given level of market power translates into actual prices. Pass-through then follows by differentiating the mapping: $\tau = \phi_c(\mu,c)$. The pass-through rate depends on both layers but in a separable way.

This separation simplifies the analysis. Rather than solving different equilibria for each $(c, x(\cdot))$ combination, we solve once in $\mu$-space and apply different transformations. Note that the equilibrium $\mu$-distribution is invariant to cost changes: when costs rise, the entire price distribution shifts, but the underlying distribution of market power (captured by $\mu$) remains fixed. When multiple equilibria exist, this invariance holds for the set.%, but the uniqueness results in \S\ref{sec:equilibrium} ensure a well-defined equilibrium path for differentiation.

\subsection{Economic Intuition} The normalized effective margin $\mu$ captures the trade-off in price-setting. Higher $\mu$ means higher profit per customer but lower probability of winning customers. Think of $\mu$ as the firm's ``aggressiveness'' in extracting surplus. Choosing $\mu$ is like choosing a position on the competition spectrum, from aggressive (low $\mu$, low margins, high market share) to passive (high $\mu$, high margins, low market share).

The consideration structure $\{\alpha_S\}$ determines how this trade-off resolves in equilibrium. Markets with more overlapping consideration sets push firms toward lower $\mu$, while markets with many captive customers allow higher $\mu$. Crucially, the demand curvature $x(\cdot)$ \emph{only affects the translation between these strategic positions and actual prices, not the positions themselves}.

\subsection{Application to Merger Analysis}
The $\mu$-isomorphism yields a decomposition for merger analysis. Any merger changes the consideration structure ${\alpha_S}$, inducing new equilibria in the margin game. Since the margin game is invariant to demand, this step can be computed once without specifying demand. The price effects at each quantile then follow from $\phi(\mu^{\mathrm{post}}(u), c) - \phi(\mu^{\mathrm{pre}}(u), c)$. This separates the problem: the competition layer determines how margins shift, whereas the curvature layer determines how those margin shifts translate to prices. The first depends only on consideration structure, the second only on demand.\footnote{For a tractable multiproduct oligopoly approach via aggregative games (with merger applications), see, in particular, \citet{nocke_schutz2018_multiproduct_firm_oligopoly_aggregative_games, nocke_schutz2025_aggregative_games_merger_analysis}. For complementary merger-approximation tools, see \citet{jaffe_weyl2013_first_order_merger_analysis}.}

\section{Equilibrium Characterization} \label{sec:equilibrium}

Having established the isomorphism, we now characterize equilibrium in $\mu$-space. The cases we consider---symmetric firms, asymmetric duopoly, and independent consideration---all fall within the ``symmetric interactions'' class of \citet{ArmstrongVickers2022Patterns} (which requires symmetry of the consideration structure's interaction parameters, not of firms' reaches), who prove existence, uniqueness, and the structural properties of equilibrium (nested supports, common minimum price, profits proportional to reach). We reformulate their equilibria in terms of normalized margins $\mu$ and derive the explicit quantile functions used below.

\subsection{Symmetric Firms}

When the consideration structure treats all firms symmetrically (for instance, when consumers randomly sample firms or when firms are arranged symmetrically in geographic or product space), equilibrium has a simple form: symmetric competition leads to a common distribution of market power.%, though firms still mix over different prices in equilibrium.

\begin{definition}\label{def:symmconsider}
The structure $\{\alpha_S\}$ is \emph{symmetric} if $\alpha_S$ depends only on $|S|$, not on the identity of firms in $S$.
\end{definition}

Under symmetry, all firms have identical reach $\sigma$ (the mass of consumers who consider them) and captive-to-reach ratio $\rho = \alpha_{\{i\}}/\sigma$ (the fraction of their potential customers who consider no other firms). The ratio $\rho$ measures the degree of market power arising from limited consideration. When $\rho$ is high, many consumers are captive to individual firms; when $\rho$ is low, most consumers compare multiple options.

\begin{proposition} \label[proposition]{prop:symmetric}
With symmetric consideration structure, the unique symmetric equilibrium in $\mu$-space has quantile function \(\mu(u) = \frac{\rho}{H(1-u)}\), where $H(s) = \frac{1}{\sigma}\sum_{S \ni i} \alpha_S s^{|S|-1}$ is the probability generating function of $|S|-1$ conditional on $i \in S$.
\end{proposition}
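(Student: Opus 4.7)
The plan is to work entirely in the margin game guaranteed by \Cref{thm:isomorphism}, where firms choose $\mu \in [0,1]$ and payoffs in a symmetric profile with common CDF $F$ are
\[
\Pi(\mu) = \mu\sum_{S\ni i}\alpha_S\bigl(1-F(\mu)\bigr)^{|S|-1} = \mu\,\sigma\,H\bigl(1-F(\mu)\bigr).
\]
Since a symmetric consideration structure induces a common reach $\sigma$ and a common generating function $H$, equilibrium characterization reduces to pinning down $F$ from the indifference condition for the constant equilibrium payoff $\bar{\pi}$.

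First I would establish three structural properties of any symmetric equilibrium $F$ in the margin game: (i) $F$ has no atoms on $(0,1]$, by the standard undercutting argument that an atom at $\mu_0$ would make $\mu_0 - \varepsilon$ strictly preferable to $\mu_0$ for small $\varepsilon > 0$; (ii) the support of $F$ is an interval, because on any gap $(\mu_1,\mu_2)$ the winning probability $\sigma H(1-F(\mu))$ is constant while $\mu$ is strictly increasing, so the top of any gap dominates the bottom, contradicting indifference; and (iii) the upper endpoint of the support equals $1$ whenever $\rho > 0$, since at the upper endpoint a firm collects only its captive mass $\alpha_{\{i\}} = \rho\sigma$, giving payoff $\mu_{\max}\rho\sigma$, which is beaten by the deviation $\mu = 1$ unless $\mu_{\max}=1$. (The degenerate case $\rho=0$ collapses to $\mu\equiv 0$, and the stated formula gives the same answer.)

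With these properties, evaluating the indifference condition at $\mu = 1$ yields $\bar{\pi} = \rho\sigma$, so the indifference equation on the support becomes
\[
\mu\,\sigma\,H\bigl(1-F(\mu)\bigr) = \rho\sigma \iff H\bigl(1-F(\mu)\bigr) = \frac{\rho}{\mu}.
\]
Because $H$ is a probability generating function with $H(0)=\rho$ and $H(1)=1$, it is continuous and strictly increasing on $[0,1]$, so it has a well-defined inverse $H^{-1}:[\rho,1]\to[0,1]$. Solving gives $F(\mu) = 1 - H^{-1}(\rho/\mu)$ on the support $[\rho,1]$, and inverting $u = F(\mu)$ produces the quantile function $\mu(u) = \rho/H(1-u)$. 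I would finally verify that this candidate $F$ is in fact an equilibrium by checking that the payoff $\mu\sigma H(1-F(\mu))$ equals $\rho\sigma$ at every $\mu$ on the support and does not exceed it off support (deviations below $\rho$ yield $\mu\sigma H(1) = \mu\sigma < \rho\sigma$, and deviations above $1$ are infeasible). Uniqueness among symmetric equilibria follows directly because the three structural properties force $\bar{\pi}=\rho\sigma$ and then the indifference condition pins down $F$ pointwise.

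The step I expect to require the most care is (iii), connecting the equilibrium payoff level $\bar{\pi}$ to the boundary behavior: one has to exclude an atom exactly at $\mu = 1$ (where the undercutting argument does not immediately apply since no rival can overbid), and handle the knife-edge case $\rho = 0$ separately so the formula still reads correctly. Everything else is a direct manipulation of the indifference condition using monotonicity of $H$.
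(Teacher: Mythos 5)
Your proposal is correct and follows essentially the same route as the paper's proof: rule out atoms by undercutting, show the support is an interval, pin the upper endpoint at $\mu=1$ via the captive-only deviation to get $\bar{\pi}=\rho\sigma$, then invert the indifference condition $\mu\,H(1-F(\mu))=\rho$ using strict monotonicity of $H$, and verify deviations below $\rho$ are unprofitable. The only detail worth noting is that strict monotonicity of $H$ (needed for uniqueness) requires $0<\rho<1$, so the $\rho=1$ degenerate case should be set aside alongside $\rho=0$, exactly as the paper does.
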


$H(s)$ encodes the competitive environment: it tells us the distribution of how many rival firms a consumer considers, conditional on considering firm $i$. The formula shows that firms mix over higher margins (higher $\mu$) when they have more captive customers (higher $\rho$). Perhaps less obviously, margins are also higher when consumers consider more rivals (lower $H$). The logic, which goes back to \citet{Rosenthal1980}, is that with more rivals, competing aggressively for contested consumers becomes less valuable since they are spread across more firms. Firms respond by shifting their mixing toward higher margins, focusing on extracting surplus from captive customers rather than competing fiercely for contested ones.

\subsection{Example: Binomial Consideration}

To make these concepts concrete, consider a market where each consumer independently considers each firm with probability $\lambda$. This captures settings like online markets where consumers randomly encounter products, or markets where advertising reaches consumers stochastically.

In this case, firm \(i\)'s reach is $\sigma_i = \lambda$ and captive mass is $\alpha_{\{i\}} = \lambda(1-\lambda)^{n-1}$. The captive-to-reach ratio is $\rho = \alpha_{\{i\}}/\sigma_i = (1-\lambda)^{n-1}$, and the PGF of $|S|-1$ conditional on $i \in S$ is $H(s) = (\lambda s + (1-\lambda))^{n-1}$. The equilibrium $\mu$-quantile function is:\[\mu(u) = \frac{\rho}{H(1-u)} = \frac{(1-\lambda)^{n-1}}{(\lambda(1-u) + (1-\lambda))^{n-1}} = \left(\frac{1-\lambda}{1 - \lambda u}\right)^{n-1}.\]

\subsection{Asymmetric Duopoly}

Any duopoly has symmetric interactions, so Proposition~1 of \citet{ArmstrongVickers2022Patterns} guarantees a unique equilibrium with common lower support bound and profits proportional to reach. We derive the explicit $\mu$-space CDFs.

Consider $n = 2$ firms with captive shares $\alpha_1 \equiv \alpha_{\{1\}}$ and $\alpha_2 \equiv \alpha_{\{2\}}$, shared segment $\alpha_{12} \equiv \alpha_{\{1,2\}}$ (consumers who consider both), and outside option $\alpha_\emptyset$. Define the captive-to-reach ratios:
\[
\rho_1 = \frac{\alpha_1}{\alpha_1 + \alpha_{12}}, \quad \text{and} \quad \rho_2 = \frac{\alpha_2}{\alpha_2 + \alpha_{12}}
\]
The captive-to-reach ratio $\rho_i$ measures the fraction of firm $i$'s potential customers who have no alternative. Higher $\rho_i$ means more market power from captive consumers.

\begin{proposition}\label[proposition]{prop:asymmetric}
Assume $1>\rho_i >  0$. Let $\rho_1 > \rho_2$. The unique equilibrium in $\mu$-space has both firms mixing on the common support $[\underline{\mu}, 1]$ with lower bound $\underline{\mu} = \rho_1$. The equilibrium CDFs on $\mu \in [\underline{\mu}, 1)$ are\footnote{Firm 1 (with higher $\rho_1$) has a mass point at $\mu = 1$ of size $\Delta_1 = 1 - F_1^\mu(1^-) = (\rho_1 - \rho_2)/(1 - \rho_2)$, while firm 2 has no atom ($F_2^\mu(1) = 1$).}
\[F_1^\mu(\mu) = 1 - \frac{1}{1-\rho_2} \left( \frac{\rho_1}{\mu} - \rho_2 \right) \quad \text{and} \quad
F_2^\mu(\mu) = 1 - \frac{\rho_1}{1-\rho_1} \left(\frac{1-\mu}{\mu}\right).\]
\end{proposition}
The corresponding asymmetric duopoly quantile functions are
\[\mu_1(u) = \begin{cases}
\frac{\rho_1}{1 - u(1-\rho_2)} \quad &\text{if} \quad u \leq 1 - \Delta_1 \\
1 \quad &\text{if} \quad u > 1 - \Delta_1,
\end{cases} \quad \text{and} \quad
\mu_2(u) = \frac{\rho_1}{1 - u(1-\rho_1)}.\]

The asymmetric equilibrium reveals how differences in captive shares shape competitive behavior. The firm with more captives ($\rho_1 > \rho_2$) prices at the monopoly margin $\mu = 1$ with positive probability. The firm with fewer captives earns rents: it benefits from the ``price umbrella'' set by the stronger firm, earning expected profits strictly above its captive value ($\pi_2^* = \rho_1(\alpha_2 + \alpha_{12}) > \alpha_2$). Both firms share the same support-lower-bound $\underline{\mu} = \rho_1$, determined by the stronger firm's desire to exploit its large captive base.

We can also order the duopoly margins. For $u \leq 1 - \Delta_1$, the quantile functions satisfy:
\[
\mu_1(u) = \frac{\rho_1}{1 - u(1-\rho_2)} > \frac{\rho_1}{1 - u(1-\rho_1)} = \mu_2(u),
\]
since $\rho_1 > \rho_2$ implies $1-\rho_1 < 1-\rho_2$. Firm 1 (with more captive customers) maintains higher margins and prices at every quantile.

\subsection{Extension to n Asymmetric Firms: Independent Consideration}

The duopoly analysis extends to $n$ asymmetric firms when consideration sets exhibit a particular structure: \emph{independent awareness}. Under this assumption, whether a consumer considers firm $i$ is statistically independent of whether she considers firm $j$. This structure, used first by \citet{Ireland1993} and \citet{McAfee1994Availability}, yields closed-form equilibrium characterizations for arbitrary $n$.

\begin{assumption}\label[assumption]{ass:independence}
Each consumer considers firm $j$ independently with probability $\lambda_j \in (0,1)$. The consideration structure is
\(\alpha_S = \prod_{j \in S} \lambda_j \prod_{k \notin S}(1-\lambda_k)\).
\end{assumption}

Under independence, firm $j$'s reach is $\sigma_j = \lambda_j$ and its captive share is $\alpha_{\{j\}} = \lambda_j \prod_{k \neq j}(1-\lambda_k)$. The captive-to-reach ratio becomes:
\[
\rho_j = \frac{\alpha_{\{j\}}}{\sigma_j} = \prod_{k \neq j}(1-\lambda_k).
\]

A key property of independence is that firm $j$'s demand share when posting $\mu$ takes a multiplicatively separable form:
\[\label{eq:indep_demand}\tag{\(\mathrm{Demand}_I\)}
q_j^\mu(\mu) = \sum_{S \ni j} \alpha_S \prod_{i \in S \setminus \{j\}}(1 - F_i^\mu(\mu)) = \lambda_j \prod_{i \neq j}\left[1 - \lambda_i F_i^\mu(\mu)\right].\]
This separability is what enables closed-form solutions. In particular, we state the $\mu$-space equilibrium and verify it directly. For the remainder of this section we posit \Cref{ass:independence} and order firms so that \(\lambda_1>\lambda_2 > \lambda_3 \ge \cdots \ge \lambda_n\), where we impose the first two strict inequalities to simplify the equilibrium construction. With this arrangement in hand, we define the common lower bound \(\underline{\mu}\coloneqq \rho_1=\prod_{h=2}^n(1-\lambda_h)\) and the upper support bounds \(\bar{\mu}_1=\bar{\mu}_2\coloneqq 1\), and for each \(k\in\{3,\dots,n\}\),
\[
\bar{\mu}_k \coloneqq \frac{\prod_{h=2}^{k-1}(1-\lambda_h)}{(1-\lambda_k)^{k-2}} \in (0,1), \quad \text{with the convention } \bar{\mu}_{n+1}\coloneqq \underline{\mu}.
\]
For each \(m\in\{2,\dots,n\}\), we also define $C_m\coloneqq \prod_{h=m+1}^n(1-\lambda_h)$ (so $C_n=1$ and $C_2=\prod_{h=3}^n(1-\lambda_h)$), and for \(\mu\in[\bar{\mu}_{m+1},\bar{\mu}_m]\) define the common multiplier $\Gamma(\mu)\coloneqq 1-\left(\underline{\mu}/(\mu\,C_m)\right)^{1/(m-1)}$.
\begin{proposition}\label[proposition]{prop:n_asymmetric}
The unique equilibrium in \(\mu\)-space has the following structure:
\begin{enumerate}
    \item Firm 1 mixes on \(\left[\underline{\mu},1\right]\) and has an atom at \(\mu=1\). Firm 2 mixes continuously on \(\left[\underline{\mu},1\right]\) with no atom. Each firm \(k\ge 3\) mixes continuously on \([\underline{\mu},\bar{\mu}_k]\) with \(\bar{\mu}_k<1\). The supports are nested: $1=\bar{\mu}_1=\bar{\mu}_2>\bar{\mu}_3\ge \cdots \ge \bar{\mu}_n>\underline{\mu}$.
    \item\label{it:452} Each firm \(j\) earns equilibrium profit $\pi_j^\ast=\lambda_j\,\underline{\mu}$, with \(\pi_1^\ast=\lambda_1\underline{\mu}=\alpha_{\{1\}}\).% in particular.
    \item For each \(m\in\{2,\dots,n\}\), for each \(\mu\in[\bar{\mu}_{m+1},\bar{\mu}_m]\),
\[
F_j^\mu(\mu)=
\begin{cases}
\Gamma(\mu)/\lambda_j, & j\le m,\\
1, & j>m,
\end{cases}
\qquad\text{and}\qquad
F_j^\mu(\mu)=0\ \text{ for }\mu<\underline{\mu}.
\]
On the common overlap \([\underline{\mu},\bar{\mu}_n]\) (where \(m=n\)),\footnote{Firm 1 has a mass point at \(\mu=1\) of size $\Delta_1 = 1-F_1^\mu(1^-)=1-\frac{\lambda_2}{\lambda_1}$. No other firm has an atom there.} %All firms \(j\ge 2\) have no atom at \(\mu=1\).}
\[\label{eq:indep_cdf}\tag{\(1\)}
\Gamma(\mu)=1-\left(\underline{\mu}/\mu\right)^{1/(n-1)} \quad\text{and}\quad F_j^\mu(\mu)=\frac{\Gamma(\mu)}{\lambda_j}.
\]
\end{enumerate}
\end{proposition}

The quantile functions follow by inverting the CDFs. %Since $F_j^\mu(\mu) = \Gamma(\mu)/\lambda_j$ on each interval where firm $j$ is active, and the active set changes at the cutoffs $\bar{\mu}_k$, the quantile function is piecewise. For firm $j \geq 2$ (no atom), on the interval where exactly $m$ firms are active---i.e., for $u \in [\lambda_{m+1}/\lambda_j,\, \lambda_m/\lambda_j]$, with $m = n, n-1, \ldots, j$ and convention $\lambda_{n+1} \coloneqq 0$---the quantile function is \[\mu_j(u) = \frac{\underline{\mu}}{C_m\left(1 - \lambda_j u\right)^{m-1}}.\] On the common overlap $[\underline{\mu}, \bar{\mu}_n]$ (where $m = n$ and $C_n = 1$), this reduces to $\mu_j(u) = \underline{\mu} \cdot (1 - \lambda_j u)^{-(n-1)}$. For firm~1, the continuous part follows the same piecewise formula on $u \in [0, \lambda_2/\lambda_1]$, with $\mu_1(u) = 1$ for $u > \lambda_2/\lambda_1$ (the atom).
We note the following stochastic dominance of margins, namely, that higher-reach firms price higher at each quantile.
\begin{corollary}\label[corollary]{cor:margin_dominance}
The margin CDFs satisfy $F_1^\mu(\mu) \leq F_2^\mu(\mu) \leq \cdots \leq F_n^\mu(\mu)$ for all $\mu$. % in the common support. 
Equivalently, $\mu_1(u) \geq \mu_2(u) \geq \cdots \geq \mu_n(u)$ for all $u \in [0,1]$.
\end{corollary}

The independent consideration model reveals a clean hierarchical structure. The key equilibrium object is \(\Gamma(\mu)=\lambda_j F_j^\mu(\mu)\), which is common across active firms on each support interval. Hence, whenever firm \(j\) is active, \(F_j^\mu(\mu)=\Gamma(\mu)/\lambda_j\). Higher-reach firms (\(\lambda_j\) large), therefore, have flatter CDFs: they place less mass on low margins and, equivalently, maintain higher margins at each quantile. The leader’s atom at \(\mu=1\) reflects this hierarchy. Firm 1 has the largest equilibrium profit target, \(\pi_1^*=\lambda_1\underline{\mu}\), and the flattest CDF, so its continuous part does not exhaust all mass below \(1\); indeed, \(F_1^\mu(1^-)=\lambda_2/\lambda_1<1\), and the remaining mass is placed at the monopoly margin.

Lower-reach firms have less captive power under independence. Within a given market,
\[
\rho_j=\prod_{k\neq j}(1-\lambda_k)=\frac{\alpha_\emptyset}{1-\lambda_j},
\]
so \(\rho_j\) is increasing in \(\lambda_j\). Instead, what lower-reach firms have are smaller equilibrium profit targets and steeper CDFs, which allow them to achieve those profits without an atom at \(\mu=1\). Firm 2 reaches \(\mu=1\) only continuously, while firms \(k\ge 3\) have supports ending at \(\bar{\mu}_k<1\). The common lower bound \(\underline{\mu}=\prod_{h\neq 1}(1-\lambda_h)=\rho_1\) depends on the reach of all rival firms: when rivals have higher reach, the competitive floor on margins falls. Pass-through heterogeneity across firms then follows from their positions in the margin distribution. Higher-reach firms post higher margins at each quantile and, when \(\phi_c\) is decreasing in \(\mu\) (as under unit or linear demand), have lower pass-through at each quantile.

Of course, the independent consideration structure is a special case of general consideration sets. The key simplification is that the demand share \eqref{eq:indep_demand} factors multiplicatively, implying that $\lambda_j F_j^\mu(\mu)$ must be equal across active firms on each support interval. \eqref{eq:indep_cdf} solves this equal-$\Gamma$ structure explicitly on the lowest interval where all $n$ firms compete. For general (correlated) consideration structures, this factorization fails. Nevertheless, for duopoly, the qualitative properties (common support, stochastic dominance ordering, atoms for high-reach firms, pass-through ranking) hold for general consideration structures.%; whether they extend to $n \geq 3$ with correlated consideration remains an open question.

\section{Quantile Pass-Through} \label{sec:quantile}

We now derive our main pass-through results. The $\mu$-isomorphism lets us compute pass-through rates by differentiating the mapping from margins to prices. All the action comes from how $\phi$ responds to costs. Throughout, fix a firm $i$ with equilibrium $\mu$-quantile function $\mu_i(u) \coloneqq \inf \left\{\mu \colon \ F_i^\mu (\mu) \geq u \right\}$.

\subsection{The Pass-Through Formula}

Quantile pass-through tells us how each price in the distribution responds to cost changes. Because firms randomize in equilibrium, different quantiles of the price distribution can have different pass-through rates. This heterogeneity in pass-through across the price distribution is a key feature of markets with price dispersion.

\begin{definition}\label{def:7}
The \emph{pass-through rate at quantile \(u\)} is \(\tau_i^Q(u;c) \equiv \frac{\partial p_i(u;c)}{\partial c}\), where $p_i(u;c) = \phi(\mu_i(u),c)$.
\end{definition}

\begin{theorem} \label{thm:quantile}
Under \Cref{ass:demand,ass:invertible}, the quantile pass-through rate is:
\[\label{eq:passthrough}\tag{\(2\)}
\tau_i^Q(u;c) = \phi_c(\mu_i(u),c) = \frac{x(p_i(u;c))(1-p_i(u;c))}{(1-c)[x(p_i(u;c)) + (p_i(u;c)-c)x'(p_i(u;c))]}
\]
\end{theorem}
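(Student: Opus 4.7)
The plan is to obtain \(\tau_i^Q\) directly from the implicit definition of \(\phi\) via the implicit function theorem, after first observing that the invariance half of the \(\mu\)-isomorphism kills any dependence of \(\mu_i(u)\) on \(c\). By \Cref{thm:isomorphism}\ref{item:invariance}, the equilibrium margin quantile function \(\mu_i(u)\) depends only on \(\{\alpha_S\}\), so \(\partial_c \mu_i(u) = 0\). Consequently
\[
\tau_i^Q(u;c) \;=\; \frac{\partial}{\partial c}\,\phi(\mu_i(u),c) \;=\; \phi_c(\mu_i(u),c),
\]
with no chain-rule contribution from \(\mu_i\). So the task reduces to computing \(\phi_c\).

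Next, I would differentiate the defining identity \eqref{eq:phi_def}, namely \((\phi-c)x(\phi) = \mu(1-c)x(1)\), with respect to \(c\) at fixed \(\mu\). This gives
\[
(\phi_c - 1)x(\phi) + (\phi-c)x'(\phi)\phi_c \;=\; -\mu x(1),
\]
and solving yields
\[
\phi_c \;=\; \frac{x(\phi) - \mu x(1)}{x(\phi) + (\phi-c)x'(\phi)}.
\]
Assumption \ref{ass:invertible} guarantees that the denominator \(x(\phi) + (\phi-c)x'(\phi) = \frac{d}{dp}[(p-c)x(p)]\big|_{p=\phi}\) is strictly positive on \([c,1]\), so the implicit function theorem applies and \(\phi_c\) is well-defined and continuous.

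Finally, I would rewrite the numerator to match the target expression. Using the defining relation \(\mu x(1) = (\phi-c)x(\phi)/(1-c)\), I get
\[
x(\phi) - \mu x(1) \;=\; x(\phi)\left[1 - \frac{\phi-c}{1-c}\right] \;=\; \frac{x(\phi)(1-\phi)}{1-c}.
\]
Substituting and setting \(\phi = p_i(u;c)\) recovers \eqref{eq:passthrough}. The only non-routine step is the first one (invoking invariance to conclude \(\partial_c \mu_i(u) = 0\)); the rest is a single application of the implicit function theorem followed by an algebraic substitution. No existence or uniqueness issue arises here because \Cref{lem:phi_exists} already provides a well-defined, \(C^1\) inverse \(\phi\), and the isomorphism theorem supplies the invariant equilibrium selection along which differentiation is performed.
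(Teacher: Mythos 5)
Your proposal is correct and follows essentially the same route as the paper: differentiate the defining identity \eqref{eq:phi_def} implicitly in \(c\) at fixed \(\mu\), substitute \(\mu x(1) = (\phi-c)x(\phi)/(1-c)\) to simplify the numerator, and evaluate at \(\mu = \mu_i(u)\). Your explicit remark that the invariance of \(\mu_i(u)\) to \(c\) eliminates the chain-rule term is left implicit in the paper's proof but is the same underlying logic.
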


\subsection{Economic Interpretation}

The pass-through formula \eqref{eq:passthrough} factors as
\[\tau_i^Q(u;c) = \frac{1-p}{(1-c)(1-\varepsilon(p))}, \qquad \text{and} \qquad \varepsilon(p) \equiv -\frac{(p-c)x'(p)}{x(p)} \in [0,1),\]
since $x(p) + (p-c)x'(p) = x(p)(1-\varepsilon)$ is the slope of the effective-margin function $(p-c)x(p)$. Note that $\varepsilon$ is a property of per-customer demand $x(\cdot)$, not of the firm's equilibrium residual demand $x(p)q_i(p)$, which is endogenous to the equilibrium.

The numerator $(1-p)/(1-c)$ is the firm's normalized headroom below the reservation value. The denominator $(1-\varepsilon)$ measures the slope of the effective margin. When $\varepsilon$ is close to 1, the effective margin is nearly flat and the firm must adjust price substantially to preserve $(p-c)x(p) = \mu(1-c)x(1)$ after a cost shock. Hence, more elastic $x(\cdot)$ at a fixed quantile raises $\tau$.

\section{Transaction-Weighted Pass-Through} \label{sec:transaction}

While quantile pass-through describes how each price in the distribution responds to costs, welfare analysis requires understanding what consumers actually pay. Consumers do not randomly draw from the price distribution but rather systematically buy more at lower prices.\footnote{Empirically, consumers disproportionately purchase at low prices, so dispersion in posted prices need not mirror dispersion in transaction prices. Reference, e.g., \citet{Sorensen2000Drugs,kaplan_menzio_rudanko_trachter2019_relative_price_dispersion}.} This selection effect fundamentally changes how we think about pass-through and incidence.

A key simplification comes from the equilibrium indifference condition: the mass of transactions at price $p$ satisfies $T_i(p;c) = \pi_i(c)/(p-c)$, so transaction volume is proportional to $1/(p-c)$, independent of the demand function $x(\cdot)$. Firms selling at lower prices must compensate with proportionally higher volume to maintain the same profit. Define the \emph{harmonic integral} $B_i(c) \equiv \int (p-c)^{-1} dF_i(p;c)$, in which case the mean transaction-weighted price is then $\bar{p}_i^{\trans}(c) = c + 1/B_i(c)$.

\begin{proposition}\label[proposition]{prop:transaction}
Posit \Cref{ass:demand,ass:invertible}. If a firm \(i\)'s posted-price support is bounded away from $c$, the transaction-weighted pass-through rate is
\[\label{eq:trans_passthrough}\tag{\(3\)}
\tau_i^{\trans}(c) = 1 + \frac{\int_0^1 \frac{\phi_c(\mu_i(u),c) - 1}{(\phi(\mu_i(u),c) - c)^2} du}{\left[\int_0^1 \frac{1}{\phi(\mu_i(u),c) - c} du\right]^2}.\]
\end{proposition}

Under unit demand, this reduces to a closed-form expression in terms of the consideration structure alone. For each firm \(i\), with $\mu_i(u)$ being firm $i$'s equilibrium quantile function in the margin game, we define \(K_i \coloneqq \int_0^1 \frac{1}{\mu_i(u)} du\). Under unit demand, \(B_i(c) = \frac{K_i}{1-c}\) for each firm \(i\). Thus,

\begin{corollary} \label[corollary]{cor:trans_asymmetric}
Under unit demand ($x(p)=1$), firm $i$'s mean transaction price and transaction-weighted pass-through satisfy
\[\bar p_i^{\trans}(c) = c + \frac{1-c}{K_i},
\qquad \text{and} \qquad
\tau_i^{\trans} \equiv \frac{d\bar p_i^{\trans}(c)}{dc}
= 1 - \frac{1}{K_i}.
\label{eq:trans_unit_demand_general}\tag{\(4\)}\]
\end{corollary}

The object $K_i$ measures the intensity of competition facing firm $i$, aggregated across its transaction distribution. When the margin distribution places substantial weight on low $\mu$ (contested transactions), $K_i$ is large and pass-through is high: firms facing stiff competition pass cost shocks to consumers. When the distribution concentrates on high $\mu$ (captive transactions), $K_i$ is small and pass-through is low: firms with market power absorb cost shocks.

Under unit demand and symmetric consideration, the equilibrium quantile function satisfies $\mu(u)=\rho/H(1-u)$, so \(K = \bar H/\rho\), where \(\bar H \equiv \int_0^1 H(s)ds\), and the general formula \eqref{eq:trans_unit_demand_general} reduces to \(\tau^{\trans} = 1 - \rho/\bar H\).

Under independent consideration, the per-firm formulas aggregate nicely. %to a closed-form industry-level expression. 
Let $\Lambda \equiv \sum_{k=1}^{n} \lambda_k$ denote total reach and $\alpha_\emptyset \equiv \prod_{k=1}^{n}(1-\lambda_k)$ the no-purchase mass.

\begin{proposition}
\label[proposition]{prop:aggregate_margin}
Posit \Cref{ass:independence} and order firms so that $\lambda_1 > \lambda_2 \geq \cdots \geq \lambda_n$. The aggregate expected transaction margin is
\[\label{eq:aggregate_margin}\tag{\(5\)}
\bar{\mu}^{\trans}
=
\frac{\underline{\mu}\,\Lambda}{1-(1-\lambda_1)\underline{\mu}}.\]
\end{proposition}

\begin{corollary}
\label[corollary]{cor:industry_passthrough}
Under unit demand and independent consideration, the industry pass-through rate is
\[\tag{\(6\)}\label{eq:industry_passthrough}
\tau^{\trans, \text{agg}}
=
1 - \frac{\underline{\mu}\,\Lambda}{1-(1-\lambda_1)\underline{\mu}}.\]
\end{corollary}

Industry pass-through is one minus the average market power among transacting consumers. The aggregate margin is determined by three scalars: the support floor $\underline{\mu}$, the total reach $\Lambda$, and the leader's reach $\lambda_1$.

There is a natural application to mergers. Standard merger analysis focuses on price levels. But mergers also affect how future cost shocks are transmitted to consumers. By the separation principle, a merger changes the consideration structure, which shifts each firm's equilibrium margin distribution. Under unit demand, the new $K_i = \int \mu_i(u)^{-1} du$ determines $\tau_i^{\mathrm{trans}} = 1 - 1/K_i$. Under independent consideration, if the merged entity's consideration probability satisfies $1-\lambda_{jk} = (1-\lambda_j)(1-\lambda_k)$---i.e., the merged firm is considered whenever either predecessor would have been---then a merger between non-leader firms leaves the margin floor $\underline{\mu}$ unchanged but shifts posted distributions toward lower margins, raising $K_i$ for the remaining firms and, hence, the transaction-weighted pass-through. Even mergers with no immediate price effect can shift the incidence of future cost shocks toward consumers.

\section{Pass-Through Envelopes} \label{sec:envelopes}

In many empirical settings, we may not know the exact demand function. Perhaps we observe that demand is downward-sloping but cannot pin down its precise curvature. Or we may know that demand belongs to a particular family (e.g., linear, constant elasticity) but not the exact parameters. This section shows that even with such partial knowledge, we can still derive robust bounds on pass-through. These bounds provide ``worst-case'' scenarios for policy analysis, following the robust welfare approach of \citet{KangVasserman2025RobustWelfare}, and help identify when precise demand estimation is crucial versus when rough knowledge suffices.

\subsection{The Envelope Problem}

Instead of starting with a known demand function and computing pass-through, we ask: given a particular margin level \(\mu\) and cost \(c\), what is the range of possible pass-through rates across all admissible demand functions?

\begin{assumption}
Let \(\mathcal{X}\) denote the class of \emph{admissible demand functions} \(x\colon(0,1]\to\R_+\) that are continuous, weakly decreasing, and continuously differentiable, have strictly positive demand at the upper bound (\(x(1) > 0\)), and satisfy \Cref{ass:invertible}. We normalize \(x(1)=1\).
\end{assumption}

Formally, fix $\mu \in [0,1]$ and $c \in [0,1)$. The inverse problem is to find all prices $p$ consistent with
\[\tag{\(7\)} \label{eq:inverse}
(p-c)x(p) = \mu(1-c)x(1)\]
for some admissible demand function $x(\cdot)$.

\subsection{Universal Bounds}

We begin with the most general case: what can we say about pass-through knowing only that demand is downward-sloping? The answer provides universal bounds that apply regardless of the specific functional form.

\begin{theorem}\label{thm:universal}
For any admissible $x \in \mathcal{X}$, $\mu \in [0,1]$, and $c \in [0,1)$, we have the price bounds \(c \leq \phi(\mu,c) \leq c + \mu(1-c)\) and the pass-through bounds \(1-\mu \leq \phi_c(\mu,c)\). The upper bound for prices and the lower bound for pass-through are attained by unit demand.
\end{theorem}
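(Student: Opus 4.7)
The plan is to split the claim into three pieces handled in a specific order, so that the price upper bound feeds directly into the pass-through lower bound. The two price bounds use only monotonicity of $x$, and the pass-through bound then follows by combining the implicit-differentiation formula already established in \Cref{thm:quantile} with the price upper bound.

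For the price bounds, the lower bound $\phi(\mu,c) \geq c$ is immediate from \Cref{lem:phi_exists}, which already locates $\phi$ in $[c,1]$. For the upper bound, I would use only that $x$ is weakly decreasing together with $\phi(\mu,c) \leq 1$, which gives $x(\phi(\mu,c)) \geq x(1)$. Rewriting the defining identity \eqref{eq:phi_def} as
\[\phi(\mu,c) - c = \frac{\mu(1-c)\,x(1)}{x(\phi(\mu,c))}\]
and substituting this inequality yields $\phi(\mu,c) - c \leq \mu(1-c)$, which is the desired bound.

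For the pass-through lower bound, I would reuse the closed-form expression from the proof of \Cref{thm:quantile},
\[\phi_c(\mu,c) = \frac{x(\phi)(1-\phi)}{(1-c)\bigl[x(\phi)+(\phi-c)x'(\phi)\bigr]}.\]
Since $x' \leq 0$, the bracketed denominator is bounded above by $x(\phi)$, and by \Cref{ass:invertible} it is also strictly positive, so cancellation is legitimate and delivers $\phi_c \geq (1-\phi)/(1-c)$. Plugging in the previously proved $\phi \leq c + \mu(1-c)$, rearranged as $1-\phi \geq (1-c)(1-\mu)$, then gives $\phi_c \geq 1-\mu$.

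Tightness is immediate: under unit demand $x \equiv 1$, the defining identity collapses to $\phi - c = \mu(1-c)$, so $\phi = c + \mu(1-c)$ and $\phi_c = 1-\mu$ exactly, and both inequalities are simultaneously saturated. There is no real obstacle here, since once \Cref{thm:quantile}'s formula is in hand the argument is a sign check; the only subtle point is positivity of the denominator bracket where $x(\phi)$ is cancelled, and that is precisely what \Cref{ass:invertible} guarantees.
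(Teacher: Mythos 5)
Your proof is correct and follows essentially the same route as the paper: the price upper bound from monotonicity of $x$ applied to the defining identity, and the pass-through bound by noting that the denominator of the formula in \Cref{thm:quantile} is at most $x(\phi)$ (the paper phrases this via $\varepsilon \equiv -(p-c)x'(p)/x(p) \geq 0$) and then feeding in the price bound. The only cosmetic differences are that the paper normalizes $x(1)=1$ and you keep it explicit, and that you spell out the unit-demand tightness check that the paper leaves implicit.
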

These bounds have important economic implications. The pass-through bound $\tau \geq 1-\mu$ tells us that firms with lower market power (lower $\mu$) have pass-through rates bounded further from zero. In the limit, firms pricing at cost ($\mu = 0$) must have pass-through of at least 1, recovering the perfect competition result. Conversely, firms extracting maximum margins ($\mu$ near 1) could have pass-through rates approaching zero.
\begin{proof}
For price bounds, we normalize $x(1) = 1$ (by rescaling quantity units). Since $x$ is decreasing, we have $x(p) \geq 1$ for $p \in [c,1]$. From \eqref{eq:inverse},
\[
\mu(1-c) = (p-c)x(p) \geq p-c \quad \Longrightarrow \quad p \leq c + \mu(1-c).
\]
The lower bound $p \geq c$ is trivial.

For the pass-through bound, from \Cref{thm:quantile}, define $\varepsilon \equiv -(p-c)x'(p)/x(p) \geq 0$. From \Cref{ass:invertible}, $\frac{d}{dp}[(p-c)x(p)] = x(p)(1-\varepsilon) > 0$, so $\varepsilon < 1$. Then
\[
\phi_c(\mu,c) = \frac{1-\phi(\mu,c)}{(1-c)(1-\varepsilon)} \geq \frac{1-(c+\mu(1-c))}{1-c} = 1-\mu,
\] as $\varepsilon \geq 0$ and $\phi(\mu,c) \leq c + \mu(1-c)$.
\end{proof}

\subsection{Bounds for Specific Demand Families}

While universal bounds are useful, we can derive tighter bounds when we know more about demand. Different assumptions about demand curvature---whether demand is linear, exponential, or has constant elasticity---yield different pass-through bounds. These family-specific bounds help connect theoretical predictions to empirical demand estimation.

\begin{theorem}\label{thm:family_bounds}
Let \(\mu \in [0,1)\). For common demand families (with \(d\coloneqq1-c\)), in addition to the universal lower bound
\(1-\mu\le\phi_c(\mu,c)\), the following hold:
\begin{enumerate}
\item \textbf{Linear demand.} \(x(p)=1+b(1-p)\) with \(b\in[0,1/d)\):
\(\phi_c(\mu,c)\le\frac{1+\sqrt{1-\mu}}{2}\). The lower bound is attained at \(b=0\) (unit demand), and the upper envelope is tight as \(b\uparrow1/d\).

\item \textbf{Constant semi-elasticity.} \(x(p)=\mathrm e^{\beta(1-p)}\) with
\(\beta\in[0,1/d)\):
\(\phi_c(\mu,c)\le1\). The lower bound is attained at \(\beta=0\) (unit demand), and the upper envelope is tight as \(\beta\uparrow1/d\).

\item \textbf{Constant elasticity} \(x(p)=p^{-\eta}\) with \(0\le\eta<1/d\) and \(c>0\):
for \(\mu\in(0,1)\),
\[
\phi_c(\mu,c)=
\frac{1-\phi(\mu,c)}
{d\left(1-\eta\frac{\phi(\mu,c)-c}{\phi(\mu,c)}\right)},
\]
where \(\phi(\mu,c)\in(c,1)\) is the unique solution to
\[
\left(\phi(\mu,c)-c\right)\phi(\mu,c)^{-\eta}=\mu d.
\]
\end{enumerate}
\end{theorem}
For constant-elasticity demand, writing \(p\coloneqq\phi(\mu,c)\), we have
\[
\phi_c(\mu,c)>1\iff p<\eta d,\quad
\phi_c(\mu,c)=1\iff p=\eta d,\quad \text{and} \quad
\phi_c(\mu,c)<1\iff p>\eta d.
\]
In particular, when \(\eta=1\),
\[
\phi(\mu,c)=\frac{c}{1-\mu d},\qquad \text{and} \qquad
\phi_c(\mu,c)=\frac{1-\mu}{\left(1-\mu d\right)^2}.
\]

Figure~\ref{fig:passthrough_bounds} illustrates the linear envelope and a collection
of CES curves. In the linear case, pass-through lies between the universal lower
bound \(\tau=1-\mu\) (unit demand, \(b=0\)) and the upper envelope
\(\tau=\left(1+\sqrt{1-\mu}\right)/2\), which is tight as \(b\uparrow1/(1-c)\). Thus,
linear demand never generates over-shifting, and for \(\mu\in(0,1)\) pass-through is
strictly below one. For CES demand with \(c>0\), there is no analogous family-wide
upper envelope. Instead, pass-through is governed by the expression in
\Cref{thm:family_bounds}: under the maintained admissible range
\(\eta<1/(1-c)\), pass-through exceeds one exactly when the implied price satisfies
\(p<\eta(1-c)\). When the CES pass-through curve crosses the line \(\tau = 1\), therefore, depends jointly on \(\eta\), \(c\), and the induced price level, rather than on a universal ``critical elasticity.'' %Any plotted CES curve with \(\eta\ge1/(1-c)\), if shown, falls outside \Cref{ass:invertible} and should be interpreted only under the alternative normalization discussed immediately afterward.

\begin{figure}[t]
    \centering
    \includegraphics[width=\textwidth]{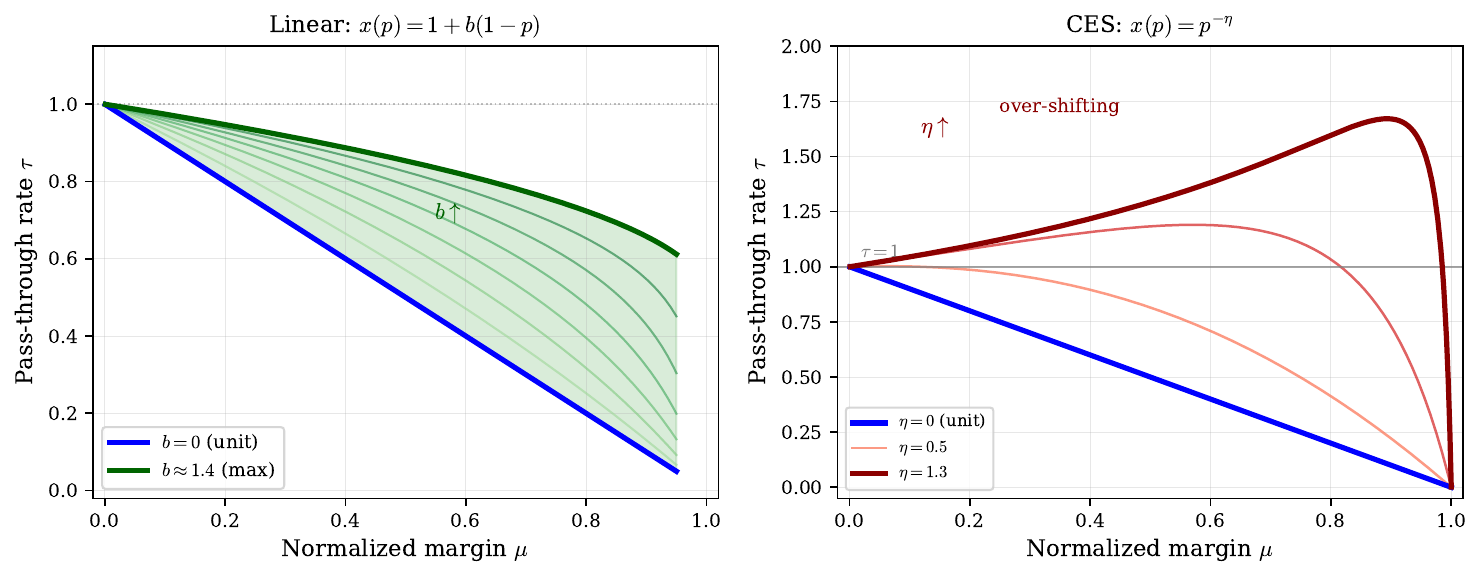}
    \caption{\textbf{Robust pass-through bounds by demand family.}}% \\ Left panel: For linear demand \(x(p)=1+b(1-p)\), pass-through lies between the unit-demand lower bound \(\tau=1-\mu\) and the upper envelope \(\tau=\left(1+\sqrt{1-\mu}\right)/2\), which is tight as \(b\uparrow1/(1-c)\). Linear demand, therefore, never generates over-shifting (\(\tau \leq 1\)). Right panel: For CES demand \(x(p)=p^{-\eta}\) with \(c>0\), pass-through is given by the formula in \Cref{thm:family_bounds}. Under the admissible range \(\eta<1/(1-c)\), pass-through exceeds one exactly when the implied price satisfies \(p<\eta(1-c)\). This crossing, therefore, depends jointly on \(\eta\), \(c\), and the induced price level---\emph{there is no universal critical elasticity}. Curves with \(\eta\ge1/(1-c)\) lie outside \Cref{ass:invertible} and are only illustrative.}
    \label{fig:passthrough_bounds}
\end{figure}

\section{Comparative Statics}\label{sec:cs}

The separation principle yields a simple logic for comparative statics. Any comparison of prices or pass-through---whether across market structures, across firms, or across demand specifications---reduces to comparing margin distributions and applying the maps $\phi$ and $\phi_c$. We develop this logic and its applications below.

The quantile pass-through formula (\Cref{thm:quantile}) expresses pass-through as a composition \(\tau^Q(u;c) = \phi_c(\mu(u), c)\). Any ordering of margins, therefore, translates into an ordering of prices and pass-through, subject to the monotonicity properties of $\phi$ and $\phi_c$.

\begin{proposition}\label{prop:ordering}
Let $\mu^A(u)$ and $\mu^B(u)$ be two margin quantile functions. If $\mu^B(u) \geq \mu^A(u)$ for all $u \in [0,1]$, then: \textbf{(a)} prices inherit the ordering, $p^B(u;c) \geq p^A(u;c)$ for all $u$; %, since $\phi(\mu,c)$ is increasing in $\mu$; 
and \textbf{(b)} pass-through inherits the ordering if $\phi_c$ is increasing in $\mu$, or the reverse ordering if $\phi_c$ is decreasing in $\mu$.
\end{proposition}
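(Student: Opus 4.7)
The plan is to exploit the compositional structure $p(u;c) = \phi(\mu(u), c)$ and $\tau^Q(u;c) = \phi_c(\mu(u), c)$ delivered by the $\mu$-isomorphism and Theorem~\ref{thm:quantile}. Both claims reduce to monotonicity of $\phi$, respectively $\phi_c$, in its first argument, applied pointwise at each quantile $u$ to the two margin levels $\mu^A(u)$ and $\mu^B(u)$.

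For part (a), I would first establish that $\phi(\cdot,c)$ is strictly increasing in $\mu$. This is immediate from the implicit definition $(\phi(\mu,c)-c)x(\phi(\mu,c)) = \mu(1-c)x(1)$: by Assumption~\ref{ass:invertible} the map $p\mapsto (p-c)x(p)$ is strictly increasing on $[c,1]$, so $\phi(\cdot,c)$ is the inverse of a strictly increasing function and inherits strict monotonicity (Lemma~\ref{lem:phi_exists} already does the heavy lifting). Pointwise composition then delivers $p^B(u;c) = \phi(\mu^B(u),c) \geq \phi(\mu^A(u),c) = p^A(u;c)$ for every $u$, with strict inequality wherever $\mu^B(u) > \mu^A(u)$.

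For part (b), the argument is essentially tautological once monotonicity of $\phi_c(\cdot,c)$ in $\mu$ is assumed. If $\phi_c(\cdot,c)$ is increasing, then $\mu^B(u)\geq \mu^A(u)$ yields $\phi_c(\mu^B(u),c)\geq \phi_c(\mu^A(u),c)$, and substituting $\tau^Q(u;c) = \phi_c(\mu(u),c)$ finishes the claim; the decreasing case is symmetric and reverses the inequality.

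There is no real obstacle here. The only conceptual point worth flagging is an asymmetry between the two parts: monotonicity of $\phi$ is built into the standing assumptions (Assumption~\ref{ass:invertible}), whereas monotonicity of $\phi_c$ is an extra hypothesis that reflects demand curvature. If desired, a sharper version of (b) could replace the hypothesis ``$\phi_c$ (in/de)creasing in $\mu$'' with explicit sign conditions on $x(\cdot)$ obtained by differentiating the closed form in \eqref{eq:passthrough}, but that refinement lies outside the proposition as stated.
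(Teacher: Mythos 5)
Your proposal is correct and follows essentially the same route as the paper: part (a) from strict monotonicity of $\phi(\cdot,c)$ via Lemma~\ref{lem:phi_exists}, and part (b) by composing the assumed monotonicity of $\phi_c$ with the formula $\tau^Q(u;c)=\phi_c(\mu(u),c)$. Your added remarks on strictness and on where the monotonicity of $\phi_c$ comes from are accurate but not needed for the result as stated.
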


\begin{proof}
Part (a) follows from $\phi(\cdot,c)$ being strictly increasing (Lemma~\ref{lem:phi_exists}). Part (b) follows from the formula $\tau^Q(u;c) = \phi_c(\mu(u),c)$ and monotonicity of $\phi_c$.
\end{proof}

For unit demand, $\phi_c(\mu,c) = 1 - \mu$, which is decreasing in $\mu$. So higher margins imply lower pass-through. For linear demand, $\phi_c$ is also decreasing in $\mu$. For CES demand with sufficiently high elasticity, \(\phi_c\) need not be decreasing
in \(\mu\)---it can increase over parts of the support, so the pass-through ordering is ambiguous.

Consider first comparing the same firm across two consideration structures $A$ and $B$. If structure $B$ induces higher margins---say, because fewer consumers are informed or more are captive---then $B$ has higher prices at each quantile. If demand is not too convex (so $\phi_c$ is decreasing), then $B$ also has lower pass-through at each quantile: the less competitive market absorbs more of cost shocks.

The same logic applies to comparing different firms within a single market. Under independent consideration, \Cref{cor:margin_dominance} establishes that higher-reach firms maintain higher margins at each quantile. Likewise, in asymmetric duopoly, \Cref{prop:asymmetric} orders firms by their captive-to-reach ratios. From \Cref{prop:ordering}, these firms also charge higher prices at each quantile; and, if demand is not too convex, they also have lower pass-through at each quantile. Firms with more captive customers extract higher margins and absorb more of cost changes, partially insulating captive consumers.

Of course, for welfare analysis, transaction-weighted prices matter. The following corollary shows that price dominance carries through to mean paid prices.

\begin{corollary}\label{cor:tw_means_referee}
Fix \(c\in\left[0,1\right)\) and a firm \(i\).
Assume that in each market \(M\in\left\{A,B\right\}\) the posted-price quantile function \(p_i^M\left(\cdot;c\right)\) satisfies \(\underline m_i^M\left(c\right)\coloneqq \inf_{u\in\left[0,1\right]}\left(p_i^M\left(u;c\right)-c\right) > 0\).\footnote{This ensures that \(B_i^M\left(c\right)\coloneqq \int_0^1 \frac{1}{p_i^M\left(u;c\right)-c}du\) is finite.} If \(p_i^A\left(u;c\right)\ge p_i^B\left(u;c\right)\) for all \(u\in\left[0,1\right]\), then \(B_i^A\left(c\right)\le B_i^B\left(c\right)\).\end{corollary}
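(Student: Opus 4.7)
The plan is to prove this directly from the quantile representation $B_i^M(c) = \int_0^1 \frac{1}{p_i^M(u;c) - c}\,du$ stated in the footnote, by observing that the integrand is a strictly decreasing function of $p$ on the domain $(c, \infty)$, and then applying monotonicity of the integral to the pointwise price inequality.

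First, I would verify that $B_i^M(c)$ is well-defined and finite in each market. The hypothesis $\underline{m}_i^M(c) > 0$ gives $p_i^M(u;c) - c \geq \underline{m}_i^M(c) > 0$ uniformly in $u \in [0,1]$, so the integrand $1/(p_i^M(u;c) - c)$ is bounded above by $1/\underline{m}_i^M(c)$ and the integral converges. (This matches the condition used elsewhere in the paper, which holds whenever there are captive customers.)

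Second, I would invoke the key monotonicity fact: the map $p \mapsto 1/(p - c)$ is strictly decreasing on $(c, \infty)$. Combined with the assumption $p_i^A(u;c) \geq p_i^B(u;c)$ for all $u \in [0,1]$ and the fact that both price quantile functions exceed $c$ by the lower-bound hypothesis, this yields the pointwise inequality
\[
\frac{1}{p_i^A(u;c) - c} \ \leq\ \frac{1}{p_i^B(u;c) - c}
\qquad \text{for all } u \in [0,1].
\]
Integrating both sides over $u \in [0,1]$ delivers $B_i^A(c) \leq B_i^B(c)$ by monotonicity of the Lebesgue integral.

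There is no real obstacle here: the argument is a one-line consequence of the quantile representation of $B_i^M(c)$ and the monotonicity of the reciprocal map. The only subtlety is ensuring the integrals are finite, which the uniform lower bound $\underline{m}_i^M(c) > 0$ takes care of. The content of the corollary is therefore purely bookkeeping, showing that a pointwise price dominance at the quantile level survives the transaction-weighted aggregation into the harmonic-mean denominator $B_i^M(c)$ that appears in Proposition~\ref{prop:mean_paid}. In particular, by $\bar p_i^{\trans}(c) = c + 1/B_i(c)$, the same ordering then lifts to mean paid prices: $\bar p_i^{A,\trans}(c) \geq \bar p_i^{B,\trans}(c)$.
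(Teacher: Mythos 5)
Your proof is correct and is essentially identical to the paper's: both rest on the observation that \(p\mapsto 1/(p-c)\) is strictly decreasing on \((c,\infty)\), apply it pointwise in \(u\), and integrate. The extra remarks on finiteness and on lifting the ordering to \(\bar p_i^{\trans}(c)\) via Proposition~\ref{prop:mean_paid} are consistent with how the paper uses this corollary.
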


\begin{proof}Directly,
\[p_i^A\left(u;c\right)\ge p_i^B\left(u;c\right) \quad \Longrightarrow \quad \frac{1}{p_i^A\left(u;c\right)-c}\le \frac{1}{p_i^B\left(u;c\right)-c} \ \left(\forall \ u\right)  \quad \Longrightarrow \quad B_i^A\left(c\right)\le B_i^B\left(c\right),\]
as \(p\mapsto 1/\left(p-c\right)\) is strictly decreasing on \(\left(c,\infty\right)\).
\end{proof}

Together, these results provide a toolkit for comparative statics: the ordering principle (\Cref{prop:ordering}) translates margin comparisons into price and pass-through comparisons, whether across markets or across firms. Corollary~\ref{cor:tw_means_referee} extends price orderings to transaction-weighted objects relevant for welfare.

\section{Economic Applications} \label{sec:applications}

Incidence varies across the price distribution. The framework characterizes pass-through at each quantile, with patterns that depend on demand curvature. We illustrate with two markets featuring well-documented price dispersion.

\subsection{Gasoline Markets}

Retail gasoline exhibits substantial price dispersion even for a homogeneous product, with nearby stations often charging different prices. This dispersion reflects heterogeneous consumer search: some drivers compare prices across stations while others buy from the nearest option. Our framework applies directly.

The empirical pass-through literature has documented significant heterogeneity in gasoline markets. \citet{MarionMuehlegger2011FuelTaxIncidence} find that pass-through of state fuel taxes varies with supply conditions, while \citet{Stolper2017WhoBearsEnergyTaxes} shows that station-level pass-through varies with local competition and spatial isolation. \citet{MontagMamrakSagimuldinaSchnitzer2023} document heterogeneity at the consumer level, finding that informed consumers (who compare prices and buy cheap) face higher pass-through than uninformed consumers (who buy from one station). Our theory provides an explanation: within a single market, informed and uninformed consumers face a common posted-margin distribution but transact at different quantiles of it, and quantile pass-through varies across the distribution (\S\ref{sec:quantile}).

Under the framework, stations in competitive locations (where many consumers compare prices) operate at low margins. If demand is not too convex, these stations exhibit high pass-through. Stations in captive locations (highway exits, isolated areas) charge higher prices and, under moderate demand curvature, absorb more of cost shocks. The framework predicts that markets with more price-comparing consumers have higher aggregate pass-through when demand is not too convex, consistent with the empirical finding that pass-through is higher in urban areas with more stations.

\subsection{Online Retail}

Online markets feature persistent price dispersion despite low search costs. \citet{ellison2009search} show that firms actively obfuscate to soften price competition.\footnote{Related obfuscation mechanisms include add-on pricing \citep{ellison2005_model_addon_pricing} and consumer myopia \citep{gabaix_laibson2006_shrouded_attributes_consumer_myopia}.} Platforms shape consideration sets through search rankings, algorithmic recommendations, and sponsored listings---even when consumers can easily compare prices, they often consider only a subset of sellers.\footnote{Reference \citet{dinerstein_einav_levin_sundaresan2018_consumer_price_search_platform_design} on platform design and consumer search; \citet{nocke_rey2024_consumer_search_steering_choice_overload} on steering and choice overload; and \citet{teh_wang_watanabe2024_strategic_limitation_market_accessibility} on platforms strategically limiting accessibility. \citet{hagiu_jullien2011_why_intermediaries_divert_search} and \citet{teh_wright2022_intermediation_steering} formulate related models of intermediation and steering.}

\citet{Heim2021AsymmetricPassThrough} provides evidence linking search behavior to pass-through. Using data from price comparison sites, he finds that pass-through of input costs depends on consumer search intensity: cost increases pass through less when consumers search more. Our framework captures this mechanism: search intensity determines which part of the margin distribution consumers transact at, and hence which pass-through rate they face. For sufficiently convex demand, consumers transacting at lower margins face lower pass-through, consistent with \citeauthor{Heim2021AsymmetricPassThrough}'s finding.

The framework also applies to platform fees. Platforms charge sellers commissions. These fees enter as costs and pass through to consumers. When a platform shows each product to fraction $\lambda$ of users, the symmetric equilibrium has $\mu(u) = [(1-\lambda)/(1-\lambda u)]^{n-1}$. Platforms that increase $\lambda$ (showing more options) intensify competition, lowering margins and raising pass-through. Platforms that decrease $\lambda$ (curating selections) create captive segments, raising margins and lowering pass-through. The same platform design choices that affect price levels also affect who bears the platform's fees.

\section{Endogenous consideration and general-equilibrium pass-through}\label{subsec:endogenous-consideration}

Our baseline analysis treats the consideration structure \(\left\{\alpha_S\right\}_{S\subseteq N}\) as exogenous and invariant to the cost level \(c\). In many environments, however, consideration itself is an equilibrium outcome of marketing, placement, and product-design choices. A leading example is \citet{eliaz2011consideration,eliaz2011strategic}, in which firms choose marketing strategies jointly with payoff-relevant product characteristics, and these choices determine which products enter consumers' consideration sets. In such settings, a cost shock can shift the consideration structure and thereby shift the equilibrium distribution of normalized effective margins. This section records the resulting modification to our pass-through formulas in reduced form.

Let \(\Delta\left(2^N\right)\) denote the simplex over subsets of \(N\). We represent endogenous consideration by a reduced-form map
\[\begin{split}
    \alpha\colon[0,1)&\to\Delta\left(2^N\right)\\ c&\mapsto\left(\alpha_S\left(c\right)\right)_{S\subseteq N}.
\end{split}
\]
For each \(c\), the pricing game is played with marginal cost \(c\) and consideration structure \(\alpha\left(c\right)\). In the margin game, \Cref{thm:isomorphism} continues to imply that any equilibrium profile of \(\mu\)-CDFs \(\left(F_i^\mu\left(\cdot;c\right)\right)_{i\in N}\) depends on the demand system only through \(\alpha\left(c\right)\). Consequently, shifts in the consideration structure mediate any \(c\)-dependence of equilibrium \(\mu\)-quantile functions \(\mu_i\left(\cdot;c\right)\).

To separate the two channels, we define the \emph{conditional} (\textit{viz.}, competition-layer fixed) quantile pass-through rate as
\[
\tau_i^{Q,\mathrm{cond}}\left(u;c\right)\coloneqq\phi_c\left(\mu_i\left(u;c\right),c\right),
\]
which is exactly the object characterized in \Cref{thm:quantile}, evaluated at the equilibrium \(\mu\)-quantile \(\mu_i\left(u;c\right)\).

When consideration shifts with costs, the \emph{general-equilibrium} (total-derivative) quantile pass-through rate is
\[
\tau_i^{Q,\mathrm{GE}}\left(u;c\right)\coloneqq\frac{d}{dc}p_i\left(u;c\right).
\]
Assume \(c\mapsto\mu_i\left(u;c\right)\) is differentiable at the cost level of interest. Via the chain rule, we have the decomposition
\[
\tau_i^{Q,\mathrm{GE}}\left(u;c\right)=\tau_i^{Q,\mathrm{cond}}\left(u;c\right)+\phi_\mu\left(\mu_i\left(u;c\right),c\right)\mu_{i,c}\left(u;c\right),
\quad \text{where} \quad
\mu_{i,c}\left(u;c\right)\coloneqq\frac{d}{dc}\mu_i\left(u;c\right).
\]
In the baseline model with fixed \(\alpha\), we have \(\mu_{i,c}\left(u;c\right)=0\), so \(\tau_i^{Q,\mathrm{GE}}\left(u;c\right)=\tau_i^{Q,\mathrm{cond}}\left(u;c\right)\) and Definition \ref{def:7} coincides with the total derivative.

The new term depends on \(\phi_\mu\), which we can compute in closed form. Differentiating \eqref{eq:phi_def} with respect to \(\mu\) (holding \(c\) fixed), we obtain
\[
\phi_\mu\left(\mu,c\right)=\frac{\left(1-c\right)x\left(1\right)}{x\left(\phi\left(\mu,c\right)\right)+\left(\phi\left(\mu,c\right)-c\right)x'\left(\phi\left(\mu,c\right)\right)}.
\]
By \Cref{ass:invertible}, the denominator is strictly positive, so \(\phi_\mu\left(\mu,c\right)>0\). Evaluating at the equilibrium quantile, and using \(p_i\left(u;c\right)=\phi\left(\mu_i\left(u;c\right),c\right)\), we can write
\[
\phi_\mu\left(\mu_i\left(u;c\right),c\right)=\frac{\left(1-c\right)x\left(1\right)}{x\left(p_i\left(u;c\right)\right)+\left(p_i\left(u;c\right)-c\right)x'\left(p_i\left(u;c\right)\right)}.
\]
Similarly, \Cref{thm:quantile} implies
\[
\tau_i^{Q,\mathrm{cond}}\left(u;c\right)=\frac{x\left(p_i\left(u;c\right)\right)\left(1-p_i\left(u;c\right)\right)}{\left(1-c\right)\left[x\left(p_i\left(u;c\right)\right)+\left(p_i\left(u;c\right)-c\right)x'\left(p_i\left(u;c\right)\right)\right]}.
\]
Combining these expressions yields an explicit formula:
\[\begin{split}
    \tau_i^{Q,\mathrm{GE}}\left(u;c\right) =
&\frac{x\left(p_i\left(u;c\right)\right)\left(1-p_i\left(u;c\right)\right)}{\left(1-c\right)\left[x\left(p_i\left(u;c\right)\right)+\left(p_i\left(u;c\right)-c\right)x'\left(p_i\left(u;c\right)\right)\right]}\\
&+\frac{\left(1-c\right)x\left(1\right)}{x\left(p_i\left(u;c\right)\right)+\left(p_i\left(u;c\right)-c\right)x'\left(p_i\left(u;c\right)\right)}\mu_{i,c}\left(u;c\right).
\end{split}
\]
Thus, relative to the conditional formula in \Cref{thm:quantile}, endogenous consideration adds a term proportional to the cost-induced shift in the equilibrium \(\mu\)-quantile function. Because \(\phi_\mu\left(\mu,c\right)>0\), the sign of the additional term is governed by \(\mu_{i,c}\left(u;c\right)\), i.e., by whether a cost increase shifts the equilibrium distribution of normalized effective margins upward or downward.

In models where consideration is shaped by equilibrium marketing and product-design choices, such as \citet{eliaz2011consideration,eliaz2011strategic}, a cost change can shift \(\alpha\left(c\right)\) through changes in advertising intensity, platform placement, or product characteristics that affect consumers' attention and sampling. Our decomposition implies that any such mechanism affects pass-through only through the induced response \(\mu_{i,c}\left(u;c\right)\) of the competition layer, while the curvature layer \(\phi\) continues to govern the mapping from \(\left(\mu,c\right)\) to prices.

\begin{example}[Unit Demand]
    When \(x\left(p\right)\equiv 1\), we have \(\phi\left(\mu,c\right)=c+\left(1-c\right)\mu\), so
\[
p_i\left(u;c\right)=c+\left(1-c\right)\mu_i\left(u;c\right)
\]
and the GE quantile pass-through simplifies to
\[
\tau_i^{Q,\mathrm{GE}}\left(u;c\right)=1-\mu_i\left(u;c\right)+\left(1-c\right)\mu_{i,c}\left(u;c\right).
\]
In this benchmark, the full wedge between conditional and general-equilibrium pass-through is \(\left(1-c\right)\mu_{i,c}\left(u;c\right)\).
\end{example}

All of the demand-robust envelope results in \S\ref{sec:envelopes} apply directly to the conditional component \(\tau_i^{Q,\mathrm{cond}}\left(u;c\right)\), evaluated at the equilibrium price quantile \(p_i\left(u;c\right)\). The additional general-equilibrium term depends on the endogenous competition-layer response \(\mu_{i,c}\left(u;c\right)\), which demand curvature alone does not discipline.

 \appendix
\section{\S\ref{sec:isomorphism} Proofs}

\subsection{Proof of \texorpdfstring{\Cref{lem:phi_exists}}{Lemma \ref{ref{lem:phi_exists}}}}
\begin{proof}[Proof of \Cref{lem:phi_exists}]
Define $g(p) \equiv (p-c)x(p) - \mu(1-c)x(1)$ for $p \in [c,1]$. By the continuity of $x(\cdot)$, the function $g$ is continuous. Moreover,
\[g(c) = -\mu(1-c)x(1) \leq 0 \leq (1-c)x(1)(1-\mu) = g(1),\]
so by the intermediate value theorem, there exists $p^* \in [c,1]$ with $g(p^*) = 0$. Uniqueness follows because $g'(p) = x(p) + (p-c)x'(p) > 0$ on $[c,1]$ by \Cref{ass:invertible}, so $g$ is strictly increasing.
\end{proof}

\subsection{Proof of \Cref{lem:existence}}\label{app:existence-proof}

Let \(U_i\left(\mu\right) \coloneqq \mu_i q_i^{\mu}\left(\mu\right)\) denote firm \(i\)'s payoff.
\begin{proof}[Proof of \Cref{lem:existence}]
We verify the hypotheses of \citet[Theorem~5]{DasguptaMaskin1986DiscontinuousGames}.

\smallskip

\noindent\textbf{1. Compactness of the strategy sets and boundedness of the payoffs.} For each \(i\), the pure strategy set is \(A_i \coloneqq \left[0,1\right]\), a closed interval.
Moreover, for all \(\mu \in \left[0,1\right]^n\),
\[
0 \le q_i^{\mu}\left(\mu\right) \le \sum_{S \ni i} \alpha_S \le 1,
\quad \Longrightarrow \quad
0 \le U_i\left(\mu\right) = \mu_i q_i^{\mu}\left(\mu\right) \le 1,
\]
so \(U_i\) is bounded.

\noindent\textbf{2. Discontinuities occur only on diagonals.}
Fix \(i\). If \(\mu\) satisfies \(\mu_i \neq \mu_j\) for all \(j \neq i\), then in each set \(S \ni i\)
either \(i \notin M_S\left(\mu\right)\) or else \(M_S\left(\mu\right)=\left\{i\right\}\), and this classification
is locally constant in a neighborhood of \(\mu\). Hence, \(q_i^{\mu}\left(\cdot\right)\) is locally constant
and \(U_i\left(\cdot\right)\) is continuous at such \(\mu\). Therefore, \(U_i\) can be discontinuous only
when \(\mu_i = \mu_j\) for some \(j \neq i\), i.e., only on a union of diagonal hyperplanes.

Equivalently, in the notation of \citet[Equation~2]{DasguptaMaskin1986DiscontinuousGames}, we may take
\(D(i)=n-1\) and, for each \(j \neq i\), define the one-to-one continuous function
\(f_{ij}^{1}\left(a_i\right) \coloneqq a_i\). Then
\[
A^{*}\left(i\right) = \left\{ \mu \in \left[0,1\right]^n \colon \exists j \neq i \text{ such that } \mu_j = f_{ij}^{1}\left(\mu_i\right) = \mu_i \right\},
\]
and the set of discontinuities of \(U_i\) is a subset of \(A^{*}\left(i\right)\).

\noindent\textbf{3. The sum of payoffs is continuous (therefore, upper semi-continuous).}
For any \(S \neq \emptyset\), uniform tie-breaking implies that the total payoff generated by mass
\(\alpha_S\) equals \(\alpha_S m_S\left(\mu\right)\):
\[
\sum_{i \in S} \mu_i  \alpha_S \frac{\mathbf{1}\left\{ i \in M_S\left(\mu\right)\right\}}{\left|M_S\left(\mu\right)\right|}
=
\alpha_S \, m_S\left(\mu\right) \sum_{i \in M_S\left(\mu\right)} \frac{1}{\left|M_S\left(\mu\right)\right|}
=
\alpha_S m_S\left(\mu\right).
\]
Summing over all \(S \subseteq N\) yields
\[
\sum_{i \in N} U_i\left(\mu\right)
=
\sum_{S \subseteq N, S \neq \emptyset} \alpha_S m_S\left(\mu\right).
\]
Because each \(m_S\left(\mu\right)=\min_{j\in S}\mu_j\) is continuous and there are finitely many
sets \(S\), the sum \(\sum_{i \in N} U_i\left(\mu\right)\) is continuous.

\noindent\textbf{4. \(U_i\) is weakly lower semi-continuous in \(\mu_i\) \citep[Definition~6]{DasguptaMaskin1986DiscontinuousGames}.}
Fix \(i\) and fix \(\mu_{-i}\). For each \(S \ni i\), define \(m_{S,-i} \coloneqq \min_{j \in S\setminus\left\{i\right\}} \mu_j\), with the convention \(m_{\left\{i\right\},-i} \coloneqq 1\). Consider the one-variable function
\(t \mapsto U_i\left(t,\mu_{-i}\right)\) on \(\left[0,1\right]\).

We claim that for every \(t_0 \in \left(0,1\right]\), \(\liminf_{t \nearrow t_0} U_i\left(t,\mu_{-i}\right) \ge U_i\left(t_0,\mu_{-i}\right)\). To see this, fix \(S \ni i\) and examine the \(S\)-contribution to \(U_i\).
If \(t_0 < m_{S,-i}\), then for all \(t\) sufficiently close to \(t_0\) from below we still have \(t < m_{S,-i}\), so \(i\) is the unique minimizer in \(S\) and the \(S\)-contribution equals
\(\alpha_S t\), which is continuous at \(t_0\).
If \(t_0 > m_{S,-i}\), then for all \(t\) sufficiently close to \(t_0\) from below we still have \(t > m_{S,-i}\), so \(i \notin M_S\left(\cdot\right)\) and the \(S\)-contribution is identically \(0\)
near \(t_0\). If \(t_0 = m_{S,-i}\), then at \(t_0\) firm \(i\) is tied for the minimum in \(S\), so the
\(S\)-contribution at \(t_0\) equals
\[
\alpha_S t_0 \frac{1}{\left|M_S\left(t_0,\mu_{-i}\right)\right|}
\le \alpha_S t_0.
\]
For any \(t < t_0\), however, \(i\) becomes the unique minimizer in \(S\), so the \(S\)-contribution equals \(\alpha_S t\). Consequently,
\[
\liminf_{t \nearrow t_0} \alpha_S t = \alpha_S t_0 \ge \alpha_S t_0 \frac{1}{\left|M_S\left(t_0,\mu_{-i}\right)\right|}.
\]
In all cases, the \(S\)-contribution satisfies the desired lower-semicontinuity inequality, \textit{viz.}, \(\liminf_{t \nearrow t_0} U_i(t,\mu_{-i}) \geq U_i(t_0,\mu_{-i})\) for all \(t_0 \in (0,1]\);
and summing over \(S \ni i\) yields
\(\liminf_{t \nearrow t_0} U_i\left(t,\mu_{-i}\right) \ge U_i\left(t_0,\mu_{-i}\right)\). This left-limit inequality establishes \citet[Definition~6]{DasguptaMaskin1986DiscontinuousGames} with \(\lambda=1\) (full weight on the left-hand \(\liminf\)) for all \(t_0 \in (0,1]\).

At the left endpoint, \(t_0=0\), we have \(U_i\left(0,\mu_{-i}\right)=0\) and \(U_i \ge 0\),
so \(\liminf_{t \searrow 0} U_i\left(t,\mu_{-i}\right) \ge 0 = U_i\left(0,\mu_{-i}\right)\), so the condition holds trivially. Thus, \(U_i\) satisfies \citet[Definition~6]{DasguptaMaskin1986DiscontinuousGames} for all \(t_0 \in [0,1]\).%,and with the endpoint convention at \(t_0=0\).

\noindent\textbf{5. Apply \citet[Theorem~5]{DasguptaMaskin1986DiscontinuousGames}.}
Steps 1-4 verify all hypotheses of \citet[Theorem~5]{DasguptaMaskin1986DiscontinuousGames}. Therefore, the
\(\mu\)-game possesses a mixed-strategy equilibrium.
\end{proof}

\subsection{Proof of \Cref{thm:isomorphism}}\label{app:isomorphism-proof}

\begin{proof}[Proof of \Cref{thm:isomorphism}]
Recall that firm $i$'s profit at price $p$ is
\[
\Pi_i(p;c) = (p-c)x(p)q_i(p) = (1-c)x(1) \cdot \mu(p;c) \cdot q_i(p),
\]
with the factor $(1-c)x(1)$ being constant across all firms and prices. Recall also firm \(i\)'s normalized effective margin at price \(p\):
\[\mu(p;c) \equiv \frac{(p-c)x(p)}{(1-c)x(1)} \in [0,1],\]
and the inverse map \(\phi(\mu,c)\), which solves
\[\label{eq:appinverse}\tag{\(A1\)}(\phi(\mu,c) - c)x(\phi(\mu,c)) = \mu(1-c)x(1).\]

Setting, in turn, \(\mu = 0\) and \(\mu = 1\) in \eqref{eq:appinverse} yield $\phi(0,c) = c$ and $\phi(1,c) = 1$. Moreover \(\phi\) is increasing in \(\mu\). Hence, there is a bijection between price-CDFs $F_i$ on $[c,1]$ and $\mu$-CDFs $F_i^\mu$ on $[0,1]$ via \(F_i^\mu(\mu) = F_i(\phi(\mu,c))\).

For a profile \(\mu \in \left[0,1\right]^n\) and a nonempty set \(S \subseteq N\), define
\[
m_S\left(\mu\right) \coloneqq \min_{j \in S} \mu_j,
\quad \text{and} \quad
M_S\left(\mu\right) \coloneqq \left\{ j \in S \colon \mu_j = m_S\left(\mu\right) \right\}.
\]
Under our uniform tie-breaking stipulation, consumers with consideration set \(S\) allocate their mass \(\alpha_S\) equally across the minimizers \(M_S\left(\mu\right)\).

Recall that firm \(i\)'s demand when posting price \(p\) is
\[q_i(p) =\sum_{S\ni i}\alpha_S\cdot
\mathbb{E}\left[
\frac{\mathbf{1}\left\{i\in M_S\left(p, p_{-i}\right)\right\}}
{\left|M_S\left(p, p_{-i}\right)\right|}
\right].
\]
When firm $i$ posts $\mu$ and rivals use $(F_j^\mu)_{j \neq i}$, the demand share becomes
\[\tag{\(A2\)}\label{eq:appdemand}
q_i^{\mu}\left(\mu\right)
\coloneqq
\sum_{S \ni i} \alpha_S \cdot \mathbb E \left[\frac{\mathbf{1}\left\{ i \in M_S\left(\mu, \mu_{-i}\right) \right\}}{\left|M_S\left(\mu, \mu_{-i}\right)\right|}\right].
\]

The equilibrium indifference condition in the original game requires constant profit on the support:
\[
(p-c)x(p)q_i(p) = \pi_i(c), \quad \forall p \in \supp(F_i).
\]
Dividing by $(1-c)x(1)$:
\[
\mu(p;c) \cdot q_i(p) = \frac{\pi_i(c)}{(1-c)x(1)} \equiv \tilde{\pi}_i, \quad \forall p \in \supp(F_i).
\]
Since $\phi$ is strictly increasing, the lowest-price firm in each consideration set is also the lowest-margin firm, so $q_i(p) = q_i^\mu(\mu(p;c))$. Under the bijection, the indifference condition becomes:
\[
\mu \cdot q_i^\mu(\mu) = \tilde{\pi}_i, \quad \forall \mu \in \supp(F_i^\mu),
\]
which is precisely the equilibrium condition for a game with:
\begin{itemize}
\item Unit demand (quantity fixed at \(1\));
\item zero marginal cost;
\item ``price'' $\mu \in [0,1]$; and
\item demand shares computed via \eqref{eq:appdemand} using $\mu$-distributions.
\end{itemize}
Moreover, the transformed equilibrium conditions depend only on $\{\alpha_S\}$ through the demand-share formula. Neither $c$ nor $x(\cdot)$ appears in the $\mu$-space equilibrium, though they determine the mapping $\phi$ back to prices.

To complete the bijection, we verify that any $\mu$-equilibrium induces an equilibrium in the original game. Suppose $(F_i^\mu)_{i \in N}$ is an equilibrium in the $\mu$-game with constant profits $\tilde{\pi}_i$ on the support. Then, for any $p$ in the support of $F_i$:
\[\begin{split}
    \Pi_i(p;c) &= (p-c)x(p)q_i(p) \\
&= (1-c)x(1) \cdot \mu(p;c) \cdot q_i(p) \\
&= (1-c)x(1) \cdot \mu(p;c) \cdot q_i^\mu(\mu(p;c)) \\
&= (1-c)x(1) \cdot \tilde{\pi}_i
\end{split}\]
where the third equality uses that $q_i(p) = q_i^\mu(\mu(p;c))$ by construction, and the fourth uses that $\mu(p;c)$ is in the support of $F_i^\mu$ when $p$ is in the support of $F_i$.

For any $p \in [c,1]$ outside the support of $F_i$, the corresponding $\mu(p;c)$ is outside the support of $F_i^\mu$. Since the map $p \mapsto \mu(p;c)$ is strictly increasing (by \Cref{ass:invertible}), each deviation in price space corresponds to a unique deviation in $\mu$-space. Therefore, the ``no profitable deviation'' inequality is preserved: if $\mu \cdot q_i^\mu(\mu) \leq \tilde{\pi}_i$ for $\mu$ off support, then $(p-c)x(p)q_i(p) \leq (1-c)x(1)\tilde{\pi}_i = \pi_i(c)$ for the corresponding $p$ off support. Thus, the induced price distributions form an equilibrium with profits $\pi_i(c) = (1-c)x(1)\tilde{\pi}_i$.

Existence of an equilibrium in the pricing game follows from \Cref{lem:existence} combined with the bijection established here.
\end{proof}

\section{\S\ref{sec:equilibrium} and \S\ref{sec:quantile} Proofs}

\subsection{Proof of \Cref{prop:symmetric}}

\begin{proof}[Proof of \Cref{prop:symmetric}]
By the $\mu$-isomorphism (\Cref{thm:isomorphism}), the margin game is strategically equivalent to the unit-demand pricing game of \citet{ArmstrongVickers2022Patterns}. Their Proposition~1 establishes that with symmetric consideration structure, the unique symmetric equilibrium has interval support $\left[\underline{\mu},1\right]$ with $\underline{\mu} = \rho$, equilibrium profit $\pi^* = \rho\sigma$, and no atoms on $\left(\underline{\mu},1\right)$.

It remains to derive the quantile function. Let $G(\mu) \equiv 1 - F^\mu(\mu)$. Since the distribution is atomless on the interior, the demand share at $\mu$ is
\[
q^\mu(\mu) = \sum_{S \ni i} \alpha_S G(\mu)^{|S|-1}.
\]
Indifference on $[\rho,1]$ requires $\mu \cdot q^\mu(\mu) = \rho\sigma$. Dividing by $\sigma$ and defining
\[
H(s) \equiv \frac{1}{\sigma}\sum_{S\ni i}\alpha_S s^{|S|-1},
\]
we obtain $\mu \cdot H(G(\mu)) = \rho$, so $\mu = \rho / H(G(\mu))$. With the quantile transformation $u = F^\mu(\mu) = 1 - G(\mu)$, this yields
\(\mu(u) = \frac{\rho}{H(1-u)}\). The support endpoints are confirmed by $H(1) = 1$ and $H(0) = \rho$, delivering $\mu(0) = \rho$ and $\mu(1) = 1$.
\end{proof}

\subsection{Proof of \Cref{prop:asymmetric}}
\begin{proof}[Proof of \Cref{prop:asymmetric}]
Any duopoly has symmetric interactions, so uniqueness follows from \citet[Proposition~1]{ArmstrongVickers2022Patterns}. We derive the CDFs from the indifference conditions.

First we pin down equilibrium profits. Firm 1 guarantees profit $\alpha_1$ by pricing at $\mu = 1$ (serving only captives). Firm 2 benefits from Firm 1's high price floor ($\underline{\mu} = \rho_1$), guaranteeing rents above its captive share:
\[
\pi_1^* = \alpha_1, \quad \text{and} \quad \pi_2^* = \underline{\mu}(\alpha_2 + \alpha_{12}) = \rho_1(\alpha_2 + \alpha_{12}).
\]
Note that $\pi_2^* > \alpha_2$ since $\rho_1 > \rho_2$. Moreover, from $\rho_i = \alpha_i/(\alpha_i + \alpha_{12})$, we have $\frac{\alpha_i}{\alpha_{12}} = \frac{\rho_i}{1-\rho_i}$ and $\frac{\alpha_i + \alpha_{12}}{\alpha_{12}} = \frac{1}{1-\rho_i}$.

Next, we use the two firms' indifference conditions to back out the cdfs. When firm 1 posts $\mu$ and firm 2 uses CDF $F_2^\mu$, firm 1's profit is:
\[
\mu[\alpha_1 + \alpha_{12}(1 - F_2^\mu(\mu))] = \alpha_1 \quad \Longrightarrow \quad
1 - F_2^\mu(\mu) = \frac{\rho_1}{1-\rho_1}\left(\frac{1-\mu}{\mu}\right).
\]

When firm 2 posts $\mu$ and firm 1 uses CDF $F_1^\mu$, firm 2's profit is:
\[
\mu[\alpha_2 + \alpha_{12}(1 - F_1^\mu(\mu))] = \pi_2^* = \rho_1(\alpha_2 + \alpha_{12})
 \quad \Longrightarrow \quad
1 - F_1^\mu(\mu) = \frac{1}{1-\rho_2} \left( \frac{\rho_1}{\mu} - \rho_2 \right).
\]%\frac{\alpha_2+\alpha_{12}}{\alpha_{12}} \left( \frac{\rho_1}{\mu} - \rho_2 \right)

Finally, we pin down the support. At the lower bound, $F_1^\mu(\underline{\mu}) = 0$ requires $\frac{\rho_1}{\underline{\mu}} - \rho_2 = 1-\rho_2$, yielding $\underline{\mu} = \rho_1$. At the upper bound, $F_2^\mu(1^{-}) = 1$ confirms firm 2 mixes continuously to 1. However, $F_1^\mu(1^{-}) = 1 - \frac{\rho_1 - \rho_2}{1-\rho_2} < 1$, confirming the atom $\Delta_1$.\end{proof}

\subsection{Proof of \Cref{prop:n_asymmetric}}

\begin{proof}[Proof of \Cref{prop:n_asymmetric}]
Independent consideration satisfies symmetric interactions \citep[Section~3]{ArmstrongVickers2022Patterns}, so their Proposition~1 guarantees a unique equilibrium with common lower bound $\underline{\mu}=\rho_1$, nested interval supports, profits $\pi_j^\ast = \lambda_j\underline{\mu}$, and the common-CDF property $\lambda_j F_j^\mu(\mu) = \lambda_k F_k^\mu(\mu) \eqqcolon \Gamma(\mu)$ for all firms active at $\mu$. We verify the claimed $\mu$-space CDFs.

When firms $\{1,\dots,m\}$ are active at $\mu$ (i.e., $\mu \in [\bar{\mu}_{m+1}, \bar{\mu}_m]$), each active firm $j \le m$ has $\lambda_j F_j^\mu(\mu) = \Gamma(\mu)$ and each inactive firm $i > m$ has $F_i^\mu(\mu) = 1$. Firm $j$'s profit is
\[
\Pi_j(\mu) = \mu\,\lambda_j\,C_m\,(1-\Gamma(\mu))^{m-1},
\]
where $C_m = \prod_{h=m+1}^n(1-\lambda_h)$. Setting $\Pi_j(\mu) = \lambda_j\underline{\mu}$ and solving yields
\[
(1-\Gamma(\mu))^{m-1} = \frac{\underline{\mu}}{\mu\,C_m}, \qquad \text{so} \qquad \Gamma(\mu) = 1 - \left(\frac{\underline{\mu}}{\mu\,C_m}\right)^{1/(m-1)}.
\]

The cutoff $\bar{\mu}_k$ where firm $k$'s CDF reaches 1 satisfies $\Gamma(\bar{\mu}_k) = \lambda_k$, yielding
\[
\bar{\mu}_k = \frac{\prod_{h=2}^{k-1}(1-\lambda_h)}{(1-\lambda_k)^{k-2}}.
\]
By taking $\mu \uparrow 1$ on the top interval (where $m=2$) we obtain $\Gamma(1^-) = \lambda_2$, so $F_1^\mu(1^-) = \lambda_2/\lambda_1 < 1$, confirming firm 1's atom $\Delta_1 = 1 - \lambda_2/\lambda_1$.

For the no-deviation check, if $\mu < \underline{\mu}$ then $\Pi_j(\mu) = \mu\lambda_j < \underline{\mu}\lambda_j = \pi_j^\ast$. If $j > m$ at some $\mu \in [\bar{\mu}_{m+1}, \bar{\mu}_m]$, then $\Pi_j(\mu) = \lambda_j\underline{\mu}(1-\Gamma(\mu))/(1-\lambda_j) \le \pi_j^\ast$ since $\Gamma(\mu) \ge \lambda_{m+1} \ge \lambda_j$.\end{proof}

\subsection{Proof of \Cref{thm:quantile}}
\begin{proof}[Proof of \Cref{thm:quantile}]
Recall the implicit definition \eqref{eq:phi_def}:
\[
(\phi(\mu,c) - c)x(\phi(\mu,c)) = \mu(1-c)x(1).
\]
Differentiating both sides with respect to $c$,\footnote{Permitted by the implicit function theorem, by the smoothness from \Cref{ass:demand,ass:invertible}.} holding $\mu$ fixed, and rearranging yields 
\[\phi_c[x(\phi) + (\phi-c)x'(\phi)] = x(\phi) - \mu x(1).\]
From \eqref{eq:phi_def}, we have $\mu x(1) = \frac{(\phi-c)x(\phi)}{1-c}$. Substituting this in produces
\[\phi_c[x(\phi) + (\phi-c)x'(\phi)] = x(\phi)\frac{1-\phi}{1-c},\]
so,
\[
\phi_c(\mu,c) = \frac{x(\phi(\mu,c))(1-\phi(\mu,c))}{(1-c)[x(\phi(\mu,c)) + (\phi(\mu,c)-c)x'(\phi(\mu,c))]}.
\]
Since $\mu_i(u)$ does not depend on $c$ (\Cref{thm:isomorphism}\ref{item:invariance}), setting $\mu = \mu_i(u)$ and noting that $\phi(\mu_i(u),c) = p_i(u;c)$ yields \eqref{eq:passthrough}.
\end{proof}

\section{\S\ref{sec:transaction} Transaction-Weighted Proofs and Extensions}
\label{app:transaction_proofs}

\subsection{Supporting Lemmas}

\begin{lemma}\label{lem:transaction} At equilibrium, for all \(p \in \supp(F_i)\) with \(p > c\), \(T_i(p;c) = x(p)q_i(p) = \frac{\pi_i(c)}{p-c}\).
\end{lemma}

\begin{proof}
Immediate from the equilibrium indifference condition.
\end{proof}

\begin{lemma} \label[lemma]{lem:mean_paid}
The mean transaction-weighted price is \(\bar{p}_i^{\trans}(c) = c + \frac{1}{B_i(c)}\), where $B_i(c) \equiv \int (p-c)^{-1} dF_i(p;c)$.
\end{lemma}
\begin{proof}
The \emph{transaction-weighted CDF} is:
\[F_i^{\trans}(p;c) = \frac{\int_{c}^p \frac{1}{s-c} dF_i(s;c)}{\int_{c}^1 \frac{1}{s-c} dF_i(s;c)},\] so the mean paid price is
\[\bar{p}_i^{\trans}(c) = \int p dF_i^{\trans}(p;c) = \frac{\int p \cdot \frac{1}{p-c} dF_i(p;c)}{\int \frac{1}{p-c} dF_i(p;c)} = \frac{1 + cB_i(c)}{B_i(c)} = c + \frac{1}{B_i(c)}.\qedhere\]
\end{proof}

\subsection{Proof of \Cref{prop:transaction}}
\begin{proof}[Proof of \Cref{prop:transaction}]
From \Cref{lem:mean_paid},
\[
\tau_i^{\trans}(c)
=
\frac{d\bar{p}_i^{\trans}(c)}{dc}
=
1-\frac{B_i'(c)}{B_i(c)^2}.
\]

We need to compute \(B_i'(c)\). By the \(\mu\)-isomorphism, \(\mu_i(\cdot)\) is
fixed as \(c\) varies, so
\[
B_i(c)
=
\int_0^1 \frac{1}{p_i(u;c)-c}\,du
=
\int_0^1 \frac{1}{\phi(\mu_i(u),c)-c}\,du.
\]
Differentiating under the integral sign is justified by the support assumption:
locally around \(c\), \(p_i(u;\tilde c)-\tilde c\) is bounded below uniformly in
\(u\), and \(\phi_c(\mu_i(u),\tilde c)\) is bounded. Consequently,
\[
B_i'(c)
=
\int_0^1
\frac{\partial}{\partial c}
\left(
\frac{1}{p_i(u;c)-c}
\right)du  =
-\int_0^1
\frac{\tau_i^Q(u;c)-1}{(p_i(u;c)-c)^2}\,du  =
-\int_0^1
\frac{\phi_c(\mu_i(u),c)-1}
{\left(\phi(\mu_i(u),c)-c\right)^2}
\,du,
\]
using \(p_i(u;c)=\phi(\mu_i(u),c)\) and
\(\tau_i^Q(u;c)=\phi_c(\mu_i(u),c)\). Substituting into the expression for
\(\tau_i^{\trans}(c)\) yields \eqref{eq:trans_passthrough}.
\end{proof}
\iffalse
\begin{proof}[Proof of \Cref{prop:transaction}]
From \Cref{lem:mean_paid},
\[
\tau_i^{\trans}(c) = \frac{d\bar{p}_i^{\trans}(c)}{dc} = 1 - \frac{B_i'(c)}{B_i(c)^2}
\]

We need to compute $B_i'(c)$. We take the quantile representation
\[
B_i(c) = \int_0^1 \frac{1}{p_i(u;c) - c} du,
\]
and differentiate under the integral sign (valid by dominated convergence when $\rho > 0$, which ensures $p_i(u;c) - c \geq \epsilon > 0$ uniformly):
\[B_i'(c) = \int_0^1 \frac{\partial}{\partial c}\left(\frac{1}{p_i(u;c) - c}\right) du = \int_0^1 \frac{-(\tau_i^Q(u;c) - 1)}{(p_i(u;c) - c)^2} du = -\int_0^1 \frac{\phi_c(\mu_i(u),c) - 1}{(\phi(\mu_i(u),c) - c)^2} du,\]
using $p_i(u;c) = \phi(\mu_i(u),c)$ and $\tau_i^Q(u;c) = \phi_c(\mu_i(u),c)$. Substituting into the expression for $\tau_i^{\trans}(c)$ yields \eqref{eq:trans_passthrough}.
\end{proof}\fi

\subsection{Proof of \Cref{cor:trans_asymmetric}}

\begin{proof}
We have
\[B_i(c)=\int_0^1 \frac{1}{(1-c)\mu_i(u)} du=\frac{K_i}{1-c}.\]
Thus, using \Cref{lem:mean_paid}, $\bar p_i^{\trans}(c)=c+(1-c)/K_i$. Differentiating, $\tau_i^{\trans}=1-1/K_i$.
\end{proof}

\subsection{Proofs of \Cref{prop:aggregate_margin} and \Cref{cor:industry_passthrough}}\label{app:aggregate_margin} % and develops comparative statics.

\iffalse
Let \(\mathcal{T}\coloneqq\left\{\text{at least one firm is considered}\right\}\), so \(\mathcal{T}\) is equivalently the event that a transaction occurs. Let
\[
\bar{\mu}^{\trans}\coloneqq\mathbb{E}\left[\mu^{\trans}\mid\mathcal{T}\right]
\]
denote the aggregate expected transaction margin: the share-weighted average of the per-firm transaction-weighted margins,
\[
\bar{\mu}^{\trans}=\sum_{i=1}^n s_i\bar{\mu}_i^{\trans},
\]
where \(s_i\) is firm \(i\)'s share of all transactions in the market. Let
\[
\Lambda\coloneqq\sum_{i=1}^n\lambda_i.
\]
Recall from \Cref{prop:n_asymmetric} that under independent consideration the common lower support bound is
\[
\underline{\mu}=\prod_{h=2}^n\left(1-\lambda_h\right).
\]
\fi
Denote
\[
\bar{\mu}^{\trans}
\coloneqq
\mathbb{E}\!\left[
\frac{(p-c)x(p)}{(1-c)x(1)}
\;\middle|\;
\text{a purchase occurs}
\right],
\]
where \(p\) is the price paid by a randomly selected purchasing consumer. Thus,
\(\bar{\mu}^{\trans}\) is the average normalized effective margin among purchasing consumers. Also denote
\(\Lambda\coloneqq\sum_{i=1}^n\lambda_i\) and recall from \Cref{prop:n_asymmetric} that under independent consideration the
common lower support bound is
\(\underline{\mu}=\prod_{h=2}^n\left(1-\lambda_h\right)\).
\begin{proof}[Proof of \Cref{prop:aggregate_margin}]
If some non-leaders have equal reaches, the payoff formula used below follows by
continuity from the (generic) strict-order case. Thus it suffices to write the argument for the strict-order case covered by \Cref{prop:n_asymmetric}.

Let \(\pi_j^\mu\) denote firm \(j\)'s equilibrium payoff in the margin game. From \Cref{prop:n_asymmetric}\eqref{it:452},
\(\pi_j^\mu=\lambda_j\underline{\mu}\). Moreover, equilibrium payoffs in the pricing game are \((1-c)x(1)\) times equilibrium payoffs in the margin game (\Cref{thm:isomorphism}). Consequently, firm \(j\)'s equilibrium profit in the pricing game is
\(\pi_j(c)
=
(1-c)x(1)\lambda_j\underline{\mu}\); and by summing across firms, total expected industry profit is \(\sum_{j=1}^n \pi_j(c)
=
(1-c)x(1)\underline{\mu}\Lambda\).

The same total expected industry profit can be computed from purchases. A purchase occurs with probability \(1-\alpha_\emptyset\), and conditional on a
purchase, the average normalized effective margin is \(\bar{\mu}^{\trans}\).
Therefore,
\[
\sum_{j=1}^n \pi_j(c)
=
(1-c)x(1)
\left(1-\alpha_\emptyset\right)
\bar{\mu}^{\trans}.
\]
Equating the two expressions, we get
\[
\left(1-\alpha_\emptyset\right)\bar{\mu}^{\trans}
=
\underline{\mu}\Lambda.
\]
Under independent consideration,
\[
\alpha_\emptyset
=
\prod_{j=1}^n\left(1-\lambda_j\right)
=
\left(1-\lambda_1\right)\underline{\mu}.
\]
Substituting gives
\[
\bar{\mu}^{\trans}
=
\frac{\underline{\mu}\Lambda}{1-\alpha_\emptyset}
=
\frac{\underline{\mu}\Lambda}
{1-\left(1-\lambda_1\right)\underline{\mu}},
\]
which is exactly \eqref{eq:aggregate_margin}.\end{proof}

The aggregate margin formula immediately yields \Cref{cor:industry_passthrough}.

\begin{proof}[Proof of \Cref{cor:industry_passthrough}]
Under unit demand, the normalized effective margin of a purchase at price \(p\) is \(\frac{p-c}{1-c}\). Thus, since \(\bar p^{\trans}(c)\) is the average price paid conditional on purchase,
we use the definition of \(\bar{\mu}^{\trans}\) to obtain
\[\bar{\mu}^{\trans}
=
\frac{\bar{p}^{\trans}(c)-c}{1-c} \quad \iff \quad 
\bar{p}^{\trans}(c)=c+\left(1-c\right)\bar{\mu}^{\trans}.
\]
\eqref{eq:aggregate_margin} tells us that \(\bar{\mu}^{\trans}\) depends only on the reach vector and not on \(c\), so
\[
\tau^{\trans,\mathrm{agg}}
=
\frac{d\bar{p}^{\trans}(c)}{dc}
=
1-\bar{\mu}^{\trans}.
\]
Substituting \eqref{eq:aggregate_margin} yields \eqref{eq:industry_passthrough}.\end{proof}

\section{\S\ref{sec:envelopes} Pass-Through Envelope Proofs}

\subsection{Proof of \Cref{thm:family_bounds}}\label{app:family-bounds-proof}

\begin{proof}
Let \(d \coloneqq 1-c\). Fix \(\mu\in(0,1)\), write \(p\coloneqq\phi(\mu,c)\), \(m\coloneqq p-c\), and \(q\coloneqq1-p\), so \(m,q\in(0,d)\) and \(m+q=d\). From \(\phi(0,c)=c\) and \(\phi(1,c)=1\), we have \(\phi_c(0,c)=1\) and \(\phi_c(1,c)=0\). The universal lower bound \(1-\mu\le\phi_c(\mu,c)\) is \Cref{thm:universal}, so we focus on the family-specific upper envelopes and formulas.

\medskip
\noindent\textbf{Case 1: Linear demand.}
Let \(x(p)=1+b(1-p)\) with \(b\in[0,1/d)\). When \(b=0\), we recover unit demand, so \(\phi_c(\mu,c)=1-\mu\). For \(b>0\), the effective-margin equation
\[\label{eqd1}\tag{\(D1\)}m\left(1+b(d-m)\right)=\mu d,\]
is solved to get
\[m=\frac{(1+bd)-\sqrt{(1+bd)^2-4b\mu d}}{2b},\]
where we take the smaller root to keep \(p\le1\).

Since \(x'(p)=-b\), \Cref{thm:quantile} delivers
\[
\phi_c(\mu,c)=\frac{x(p)(1-p)}{d\left[x(p)+(p-c)x'(p)\right]}
=\frac{(1+bq)q}{d(1+bq-bm)}.
\]
Using the equilibrium relation \(b=\left(\mu d-m\right)/(mq)\), this simplifies to
\[
\phi_c(\mu,c)=\frac{\mu q^2}{(m-\mu d)^2+\mu(1-\mu)d^2}.
\]
Differentiating with respect to \(m\) yields
\[
\frac{\partial\phi_c}{\partial m}
=-\frac{2\mu(1-\mu)dm(d-m)}
{\left[(m-\mu d)^2+\mu(1-\mu)d^2\right]^2}<0.
\]
Thus, \(\phi_c\) is strictly decreasing in \(m\). Implicit differentiation of \eqref{eqd1} produces
\[
\frac{\partial m}{\partial b}
=-\frac{m(d-m)}{1+bd-2bm}
=-\frac{mq}{1+bd-2bm}<0,
\]
where the denominator is positive by \Cref{ass:invertible}. Therefore,
\(\phi_c\) is increasing in \(b\). The supremum over admissible \(b<1/d\) is the limit as \(b\uparrow1/d\). In that weak-boundary case, \eqref{eqd1} becomes
\(m\left(2-\frac{m}{d}\right)=\mu d\),
so
\(m=d\left(1-\sqrt{1-\mu}\right)\).

Substituting into the pass-through formula, \eqref{eq:passthrough} in \Cref{thm:quantile}, yields
\[
\sup_{b<1/d}\phi_c(\mu,c)=\frac{1+\sqrt{1-\mu}}{2}.
\]

\smallskip
\noindent\textbf{Case 2: Constant semi-elasticity.}
Let \(x(p)=e^{\beta(1-p)}\) with \(\beta\in[0,1/d)\). When \(\beta=0\), we again
recover unit demand, so \(\phi_c(\mu,c)=1-\mu\). For \(\beta>0\), the inverse
equation is
\[
(p-c)e^{\beta(1-p)}=\mu d.
\]
Because \(x'(p)=-\beta x(p)\), \Cref{thm:quantile} produces
\[
\phi_c(\mu,c)=\frac{x(p)(1-p)}{d\left[x(p)+(p-c)x'(p)\right]}
=\frac{1-p}{d\left(1-\beta(p-c)\right)}
=\frac{q}{d(1-\beta m)}.
\]
Hence,
\[
\phi_c(\mu,c)\le1
\iff q\le d(1-\beta m)
\iff d-m\le d-d\beta m
\iff m(1-d\beta)\ge0,
\]
which holds for every \(\beta<1/d\). Thus, \(\phi_c(\mu,c)\le1\). The bound is
tight because in the weak-boundary case \(\beta=1/d\),
\[
\phi_c(\mu,c)=\frac{q}{d(1-m/d)}=1
\]
for every \(\mu\in(0,1)\), so \(\sup_{\beta<1/d}\phi_c(\mu,c)=1\).

\smallskip
\noindent\textbf{Case 3: Constant elasticity.}
Let \(x(p)=p^{-\eta}\) with \(0\le\eta<1/d\) and \(c>0\). The effective-margin
equation is
\[
(p-c)p^{-\eta}=\mu d.
\]
Moreover,
\[
\frac{d}{dp}\left[(p-c)p^{-\eta}\right]
=p^{-\eta-1}\left[p-\eta(p-c)\right],
\]
so admissibility is equivalent to \(p-\eta(p-c)>0\) on the relevant support.

Applying \Cref{thm:quantile} produces
\[
\phi_c(\mu,c)=
\frac{p^{-\eta}(1-p)}
{d\left[p^{-\eta}-\eta(p-c)p^{-\eta-1}\right]}
=\frac{1-p}{d\left(1-\eta\frac{p-c}{p}\right)}.
\]
Subtracting one and simplifying,
\[
\phi_c(\mu,c)-1
=\frac{(p-c)(\eta d-p)}
{d\left[p-\eta(p-c)\right]}.
\]
Since the denominator is positive on the interior support, for \(p\in(c,1)\) we
have
\[
\phi_c(\mu,c)>1\iff p<\eta d,\qquad
\phi_c(\mu,c)=1\iff p=\eta d,\qquad
\phi_c(\mu,c)<1\iff p>\eta d.
\]

In the special case \(\eta=1\),
\[
(p-c)p^{-1}=\mu d\Longrightarrow p=\frac{c}{1-\mu d}.
\]
Substituting into the pass-through formula, \eqref{eq:passthrough} in \Cref{thm:quantile}, yields
\[
\phi_c(\mu,c)=\frac{1-\mu}{\left(1-\mu d\right)^2}.
\]
At \(\eta=0\), the CES family collapses to unit demand and recovers the lower bound \(\phi_c(\mu,c)=1-\mu\).\end{proof}

\bibliography{references}
\bibliographystyle{plainnat}

\newpage

\section{Supplementary Appendix}

\subsection{\S\ref{sec:transaction} Further Discussion}

Three features deserve emphasis.

\begin{remark}
\label{rem:attention_decomposition}
\Cref{prop:aggregate_margin} is just an accounting identity for expected industry profit:
\[
\sum_{j=1}^n\pi_j(c)=\left(1-c\right)x(1)\underline{\mu}\Lambda
=\left(1-c\right)x(1)\left(1-\alpha_\emptyset\right)\bar{\mu}^{\trans}.
\]
The left-hand side computes expected industry profit by summing firm-level equilibrium payoffs. The right-hand side computes the same object by averaging realized transaction margins. In particular, the proof of \Cref{prop:aggregate_margin} uses only the margin-game payoff identity. The unit-demand assumption enters in \Cref{cor:industry_passthrough}, when normalized margins are translated back into transaction prices.
\end{remark}

\begin{remark}
\label{rem:sufficient_statistic}
The conditional aggregate margin is determined by three market-level numbers: the support floor \(\underline{\mu}\), the total reach \(\Lambda\), and the no-purchase mass \(\alpha_\emptyset\) (equivalently, \(\lambda_1\)). This is still a substantial dimension reduction relative to the full reach vector \(\left(\lambda_1,\ldots,\lambda_n\right)\), but unlike the industry-profit identity above, the conditioned object cannot be recovered from \(\left(\underline{\mu},\Lambda\right)\) alone.
\end{remark}

\begin{remark}
\label{rem:pic_comparative}
\Cref{prop:aggregate_margin} yields sharp signs for the conditional aggregate margin. The leader's reach raises the average margin paid by transacting consumers:
\[
\frac{\partial\bar{\mu}^{\trans}}{\partial\lambda_1}
=\frac{\underline{\mu}\left[1-\underline{\mu}\left(1+\Lambda-\lambda_1\right)\right]}{\left[1-\left(1-\lambda_1\right)\underline{\mu}\right]^2}>0,
\]
where the sign follows from
\[
\frac{1}{\underline{\mu}}=\prod_{h=2}^n\left(1+\frac{\lambda_h}{1-\lambda_h}\right)\ge 1+\sum_{h=2}^n\frac{\lambda_h}{1-\lambda_h}>1+\sum_{h=2}^n\lambda_h=1+\Lambda-\lambda_1.
\]
Each non-leader's reach lowers it:
\[
\frac{\partial\bar{\mu}^{\trans}}{\partial\lambda_k}
=\frac{\underline{\mu}}{\left[1-\left(1-\lambda_1\right)\underline{\mu}\right]^2}\left(1-\left(1-\lambda_1\right)\underline{\mu}-\frac{\Lambda}{1-\lambda_k}\right)<0
\]
for \(k\neq 1\), and the sign follows because
\[
1-\left(1-\lambda_1\right)\underline{\mu}=1-\alpha_\emptyset\le \Lambda<\frac{\Lambda}{1-\lambda_k}.
\]
Conditioning on transaction therefore yields a clean asymmetry: greater leader reach pushes the average paid margin up, whereas greater non-leader reach pushes it down.
\end{remark}

\end{document}